\theoremstyle{definition}
\newtheorem{example}{Example}
\newtheorem*{example*}{Example}
\newtheorem{definition}{Definition}
\theoremstyle{theorem}
\newtheorem{theorem}{Theorem}\setcounter{theorem}{-1}
\newtheorem{proposition}{Proposition}
\newtheorem{lemma}{Lemma}
\newtheorem*{axiom*}{Axiom}
\newtheorem{corollary}{Corollary}
\newtheorem{observation}{Observation}
\newtheorem*{observation*}{Observation}
\newtheorem*{claim*}{Claim}
\title{Power and freedom in mechanisms\thanks{We thank Marc Fleurbaey, Stéphane Gonzalez, Sidartha Gordon, Nicolas Motz, Suresh Naidu, Stefan Napel, and Itai Sher, as well as seminar participants at University Paris I, PSE, the University of Saint-Étienne, the University of Bayreuth, Humboldt University, the University of Caen Normandy, the WZB, the NE\&EP online seminar, and the AFSE 2023 conference, for helpful comments and suggestions, including on an earlier version of this paper that circulated under a different title.
Financial support from the Deutsche Forschungsgemeinschaft through CRC TRR 190 (project number 280092119) is gratefully acknowledged.}}
\author{
    Christian Basteck\thanks{WZB Berlin Social Science Center, christian.basteck@wzb.eu}
    \and 
    Ulysse Lojkine\thanks{Sciences Po, Axpo \& CRIS, ulysse.lojkine@sciencespo.fr}
}
\begin{document}

\maketitle

\begin{abstract}

    In a strategy-proof mechanism, the influence of an agent may be measured by the set of outcomes that arise as the agent's (reported) type varies. More specifically, we refer to an agent's influence on her own relevant outcomes as her freedom, and to the influence on outcomes relevant for other agents as her power over others. The framework generalises both the notion of opportunity set from the freedom of choice literature, and established power indices for binary voting. It identifies constrained efficient mechanisms as those that maximise agents' freedom. Applying our framework to the analysis of assignment rules, we provide novel characterisations of the top trading cycles rule and bipolar serial dictatorships in terms of their freedom and power properties.
\\
\textbf{Keywords:} Freedom of Choice, Power, Strategy-Proofness, Assignment Problem
\\
\textbf{JEL Classification:} C72, D63, D82.
    
\end{abstract}

\section{Introduction}
Contemporary economic theory addresses issues of freedom and power, but in a scattered fashion. We possess a rich theory of power in voting committees, yet it remains confined to that specific setting. Industrial organisation and bargaining theory also address the concept within their respective frameworks, but no unified theory of power spans these domains. At the same time, there is a literature on freedom of choice \citep{PattanaikXu1990, PattanaikXu1998}, which aims to rank opportunity sets according to the freedom they offer. Again, this is a highly specific framework, which excludes strategic interactions between agents; and it remains largely disconnected from the literature on power. As \citet{DowdingVanHees2008} note, “a growing literature on the formal analysis of freedom emerged largely independent of the literature on power,” which leads them to wish for “the formulation of a general formal framework that enables integration of the analyses of power and freedom.”

A generalisation of the existing theories in several directions thus appears desirable: to extend the analysis of power beyond voting; to extend the analysis of freedom beyond individual choice from a given set; and to integrate both concepts within a common framework. Such a generalisation would also foster dialogue with other disciplines. Entire strands of normative philosophy and political theory—such as \citet{Anderson1999}'s relational egalitarianism or \citet{Pettit2006}'s republicanism—advocate evaluating economic institutions not in terms of welfare, but in terms of freedom and power. So far, economic theory has not been able to contribute much to this discussion, at least once we move beyond voting systems.

This paper takes a first step toward this research agenda, by proposing a tool that applies to a class of mechanisms much broader than those mentioned above—although still constrained by the assumption of strategy-proofness.
To do so, our guiding intuition is based on the \textit{menu} of an agent, the set of possible outcomes she can bring about by varying her (reported)  preferences -- a concept used under this or other names by various authors \citep{hammond1979straightforward, BarberaSonnenscheinZhou1991, gonczarowski2023strategyproofness, KatuscakKittsteiner2025}.

Existing measures of both freedom of choice and voting power can be understood as special cases of this approach. The freedom of choice literature ranks opportunity sets in terms of how much choice they offer. In game-theoretic terms, the same setting can be reinterpreted as a direct mechanism with a single agent who reports a preference list and receives, among the available items, the one she ranked highest. 
An opportunity set can then be interpreted as the menu of a strategy-proof, single-agent mechanism. Regarding voting, it is well known that existing power indices—whether Banzhaf or Shapley-Shubik—can be obtained as the probability that a player is decisive, i.e., able to determine the outcome through their vote. But if the voting game is interpreted as a direct mechanism, then in our terms, an agent is decisive if their menu is \{Yes, No\}, 
and non-decisive if their menu is a singleton, either \{Yes\} or \{No\}.

So far, we have described an agent’s influence over a game’s  outcome interchangeably as power or freedom.
However, in many economic settings, agents may care about a particular dimension of the outcome; for example, their individual allocation of private goods, while being assumed to be indifferent with respect to other dimensions, such as the allocation of private goods to other agents. In such cases, the menu concept can be refined. On the one hand, an agent’s menu in terms of outcome dimensions that are directly relevant for that agent; we refer to this as her freedom.\footnote{This corresponds to the classic notion of a menu, as introduced by \citep{hammond1979straightforward}.} On the other hand, an agent’s menu in terms of outcome dimensions relevant for other agents; we refer to this as her power over others.


As a first result, we identify a direct relationship between power, or more precisely, freedom, and welfare. As both philosophers and voting theorists have long understood, these concepts need not coincide: one may enjoy high welfare without any power—think of the slave of a benevolent master, or a disenfranchised citizen whose preferences happen to align with those of the voting majority. Yet under a weak richness assumption on preferences we show  a connection at another level (Lemma~\ref{lemma:indiv.menu.welfare}): one mechanism gives you more freedom than another (a larger menu in the sense of set inclusion) for any opposing preference profile if and only if it also gives you higher welfare at any profile. At the population level, it means  that a mechanism Pareto-dominates another if and only if it grants more freedom to all agents (Theorem~\ref{thm:menu.pareto.dom}). 
Furthermore, we show that non-bossy mechanisms can be viewed as those that minimize power over others, relative to individual freedom. In fact, for many economic environments, non-bossy mechanisms satisfy a condition of quantitative equality between each agent’s freedom and their power over opponents (Proposition \ref{prop:non-autarkic-and-non-bossy}).

As the main application of our analytical categories, we provide a detailed mapping of freedom and power within the class of assignment mechanisms that are group-strategy-proof and efficient—trading cycles mechanisms characterized by \citet{pycia2017incentive}. In particular, we analyse the trade-off between two properties that may be normatively appealing: first, universal possibility of complete freedom, i.e., granting everyone unrestricted freedom for at least some preference profile; second, minimal bilateral power, i.e., ensuring that no individual can influence another’s outcome beyond what is necessary to achieve efficiency.
Top trading cycle mechanisms have the first property and are characterized by it among group-strategy-proof and efficient rules (Theorem~\ref{thm:ttc-char}); however, they create situations where an agent has maximal bilateral power over another. Bipolar serial dictatorship mechanisms (a class of rules that slightly extends the class of serial dictatorships) have the second property and are characterised by it among group-strategy-proof and efficient rules (Theorem~\ref{thm:bipolar-SD-char}); but this is at the cost of imposing \textit{ex ante} who will have more and who will have less freedom.

\paragraph{Literature review}
An inspiration for this paper is the work of \citet{NapelWidgren2004} and \citet{KurzMayerNapel2021}, who define the power of an agent over an outcome as the sensitivity of that outcome to a ‘‘tremble" in the agent's preferences.\footnote{We thus depart more markedly from concepts of \textit{a priori} \citep{FelsenthalMachover1998, FelsenthalMachover2004} or “formal” \citep{BrahamHoller2005, BrahamHoller2005b, Bervoets2007} power, where the analysis is based exclusively on the game form and not on the agents’ preferences. These debates are discussed in Appendix~\ref{appendix:power-preferences}.} While they confine themselves to voting games, our framework is more general.

Independently from one another, several papers have developed freedom or power measures in interactive contexts that formalise in different ways the common intuition of equilibrium menus.\footnote{This is in contrast with another strand of the literature which conceives power as a property of rules of the game only, abstracting from players' preferences, as in the study of effectivity functions in cooperative game theory \citep{MoulinPeleg1982, AbdouKeiding1991, PetersKaros2018}.} \citet{Rommeswinkel2014} and \citet{Sher2018, Sher2025} develop cardinal freedom measures that capture the flexibility offered to an agent. Our framework is similar in spirit to these papers, but differs from them in that they take a probability distribution over agents’ preferences (or vote profiles) as their starting point, whereas our measure applies to a specific preference profile. Furthermore, whilst \citet{Sher2025} argues that under non-binary, manipulable voting rules, pivotality-based voting power measures over-estimate voters' freedom of choice, we avoid this problem by focussing on strategy-proof mechanisms.

\citet{GoerlachMotz2024} define a cardinal power index in a non-cooperative game by computing the effect of a counterfactual variation in the agent's preferences and like ours, their measure is specific to a given preference profile. But in their paper, the domain of potential preferences for a given player that they consider is the set of preferences of the other players, which makes sense in some situations, but less in others; among others, it prevents measuring power in a game with only one agent, thus disconnecting power from freedom. A second difference is that they measure the effect of such a counterfactual preference change on the outcome through its effect on the actual utility function of the agent, which makes their index dependent not only on the ordinal but also on the cardinal preferences of the agents.

\cite{alva2019strategy} consider a setting with outside options and show that strategy-proof Pareto-improvement is equivalent to an increase in participation. Put differently, they show that menus expand so as to offer more agents something they weakly prefer over an outside option. Instead, we show that menus need to expand in an even stronger, set inclusion sense.\footnote{Neither result implies the other, due to the different invoked richness conditions on preferences.}

Further, our analysis of mechanisms via menus brings us in contact to the mechanism design literature on menu-descriptions of mechanisms, such as \citet{gonczarowski2023strategyproofness}. The last section of the paper makes an intensive use of the literature on assignment rules, including \citet{Papai2000}'s and \citet{pycia2017incentive}'s studies of efficient and group-strategy-proof rules, as well as \citet{BogomolnaiaDebEhlers2005}'s definition of Bi-polar Serial Dictatorship. An alternative proof of \citet{LongVelez2021}'s result also obtains as a corollary of our theorem~\ref{thm:ttc-char}.

\paragraph{Outline of the paper} 
Section~\ref{section:setting} provides the framework for defining menus in strategy-proof mechanisms. In section~\ref{section:freedom}, we propose two ways to compare menus, by set inclusion or by cardinality, and we establish their relation to welfare comparisons. Section~\ref{section:power-over-others} disaggregates between freedom and power over others when each agent cares only about a dimension of the overall outcome, and it provides a sufficient condition for each agent's cardinal freedom to coincide with her cardinal power over others. Finally, Section~\ref{section:assignment} applies these tools to strategy-proof assignment rules, pointing to their shared properties in terms of freedom and its distribution as well as to the specificities of two mechanisms, (Bipolar) Serial Dictatorship and Top Trading Cycles.

\section{Setting}
\label{section:setting}
\subsection*{Outcomes and preferences}
Let $A$ be a set of \emph{outcomes} and $N$ a set of agents. Agents have \emph{preferences} over outcomes, given by a weak order $R_i$ for every $i\in N$;\footnote{I.e., a binary relation that is strongly connected (for all $a,b \in A$ we have $aR_ib$ or $bR_ia$) and transitive (for all $a,b,c\in A$, if $aR_ib$ and $bR_ic$ then $aR_ic$).} let $I_i$ denote the symmetric and $P_i$ denote the asymmetric part of $R_i$.\footnote{I.e., $aI_ib$ iff $aR_ib$ and $bR_ia$ and $aP_ib$ iff $\neg bR_ia$.} 
Further, let $\mathcal{R}_i$ denote a set of possible preference relations for agent $i$ and refer to it as the domain of $i$'s preferences; let $\mathcal{R}=\times_{i\in N}\mathcal{R}_i$ be the corresponding \emph{domain} of possible preference profiles. 
For each $i\in N$, the opposing preference profile is denoted by $R_{-i}=(R_j)_{j\in N\backslash \{i\}}$ and the domain of opposing preference profiles as $\mathcal{R}_{-i}$. For each $G\subseteq N$, define $R_G\in \mathcal{R}_G$ and $R_{-G}\in \mathcal{R}_{-G}$ analogously.

Together the set of outcomes $A$, the set of agents $N$, and the domain of agents' preferences over outcomes $\mathcal{R}$ form an (economic) \emph{environment} $(A,N,\mathcal{R})$.

In many economic environments,  it is reasonable to assume that only some aspects of an outcome are relevant to an agent. For example, in a pure exchange economy where an outcome is an allocation of all goods among all agents, it is commonly assumed that only the own bundle of allocated goods is relevant to an individual agent. 
Formalising this, we say that $a$ and $b$ are equivalent for $i$, denoted $a\equiv_ib$, if for all $R_i\in \mathcal{R}_i$ we have $aI_ib$. For a given environment and agent $i$, this gives rise to equivalence classes of outcomes $A$ under $\equiv_i$ which we denote as $[a]_i=\{b\in A: a\equiv_ib\}$. We refer to $[a]_i$ as an $i$-relevant outcome\footnote{Here we follow \cite{gonczarowski2023strategyproofness}. \cite{barbera2016group} refer to $[a]_i$ as the \emph{consequence} that \emph{alternative} $a$ has for $i$. The role of these equivalence classes, which is to group together outcomes that are not meaningfully distinct from a freedom of choice perspective, can be compared to that of the “attributes" in \cite{NehringPuppe2002, NehringPuppe2008, Sher2018}, but the former are derived from preferences whereas the latter are exogenously defined.} and denote the partition of $A$ into $i$-relevant outcomes as $A_i=\left\{[a]_i|a\in A\right\}$. In a slight abuse of notation we use $R_i\in \mathcal{R}_i$  to also denote $i$'s preferences over $i$-relevant outcomes, i.e., the weak order on $A_i$ induced by $i$'s preferences over outcomes,\footnote{I.e., for all $a,b\in A$, $[a]_iR_i[b]_i$ iff $a R_i b$.} and $\mathcal{R}_i$ to denote the domain of such preferences over  $i$-relevant outcomes.

Last, as a weak richness condition, we may assume that each $i$-relevant outcome may be ranked as most-preferred for at least one of $i$'s possible preferences.

\begin{definition}[Top-rich preference domains]
    Let $(A,N,\mathcal{R})$ be an  environment. We say that $\mathcal{R}_i$, $i\in N$, is top-rich if for every $[a]_i\in A_i$, there exists $R_i\in \mathcal{R}_i$ such that $[a]_iP_i[b]_i$ for all $[a]_i\neq [b]_i\in A_i$. Moreover, $\mathcal{R}$ is top-rich if $\mathcal{R}_i$ is top-rich for every $i\in N$.
\end{definition}

Top-richness is a commonly invoked condition in social choice theory, for example, referred to simply as `rich' by \cite{BarberaSonnenscheinZhou1991, serizawa1995power, barbera1999maximal}, or `minimally rich' by \cite{chatterji2011tops,chatterji2018random};\footnote{The authors restrict attention to strict preferences, so that no two outcomes are equivalent and $A_i$ may be identified with $A$ itself; see also Example \ref{exmp.no.equiv}.} in the context of (object) assignment rules, \cite{hu2025local} refer to it as `top-one richness'. Moreover, many other common domain assumptions imply top richness:

\begin{example}
Consider a finite set of agents $N$ and a set of objects $O$ with $|N|=|O|$. Each agent should receive one object, so that the set of outcomes $A$ is the set of all bijections $\mu:N\rightarrow O$ while an agent's $i$-relevant outcome can be identified with the object they receive under $\mu$. Let the domain of agents preferences over objects be the domain of all strict orderings. Then, as any object is ranked uniquely at the top for some of $i$'s possible preferences, the domain is top-rich. 
\end{example}

\begin{example}\label{exmp.no.equiv}
Consider a set of voters $N$ who have to choose an alternative $a\in A$. Agents have single-peaked preferences over $A$. Since no agent $i$ is indifferent between any two alternatives for \emph{all} of her possible preferences, the equivalence classes $[a]_i$ are all singletons and $A_i$ may be identified with $A$ itself. As any alternative may be uniquely top-ranked by any agent, the domain of (single-peaked) preferences is top-rich.
\end{example}

\subsection*{Mechanisms, Social Choice Rules, and Strategy-Proofness}
A mechanism allows agents in a given economic environment to each choose a strategy and thereby jointly determine an outcome. We will focus on mechanisms where each agent is sure to have a (weakly) dominant strategy. Thus, w.l.o.g., we consider direct revelation mechanisms where each agent reports her preferences, finds it optimal to do so truthfully, and the mechanism maps preference profiles to outcomes -- formally, a strategy-proof social choice rule.

\begin{definition}[(Strategy-proof) Mechanism]
    Given an environment $(A,N,\mathcal{R})$, a (direct revelation) mechanism is a singleton valued social choice rule, i.e., a function:
    \begin{align*}
        \phi: \; & \mathcal{R}_1 \times ... \times \mathcal{R}_n \rightarrow A \\ 
        & (R_1, ..., R_n) \mapsto a.
    \end{align*}
    A mechanism  $\phi$ is \emph{strategy-proof}, if for all $i \in N$, $R \in \mathcal{R} $, $R'_i \in \mathcal{R}_i$ we have
    \begin{equation*}
        \phi(R)\; R_i \; \phi(R'_i, R_{-i}).
    \end{equation*}
\end{definition}
\medskip


\begin{definition}[Menus]
\label{def:menus}
    Any (direct revelation) mechanism $\phi$, together with a given opposing profile of reported preferences $R_{-i}\in \mathcal{R}_{-i}$, gives rise to an effective choice set of agent~$i$:\[\mathcal{M}_i^\phi(R_{-i}):=\{\phi(R_i,R_{-i})|R_i\in \mathcal{R}_i\}\subseteq A.\]
Distinguishing only between different $i$-relevant outcomes, we correspondingly define \[
\mathcal{M}_{i|i}^\phi(R_{-i}):=\{[\phi(R_i,R_{-i})]_i|R_i\in \mathcal{R}_i\}\subseteq A_i,\] i.e., the canonical projection of $\mathcal{M}_i^\phi(R_{-i})$ onto $A_i$, which we will call the ($i$-relevant) \emph{menu} of $i$ (given $R_{-i}$ and $\phi$).\footnote{Here we follow \cite{gonczarowski2023strategyproofness}. Other authors have referred to these sets 
as option sets \citep{BarberaSonnenscheinZhou1991}, proper budget sets \citep{LeshnoLo2021}, feasible sets \citep{KatuscakKittsteiner2025}.} 
\end{definition}

An agent's menu reflects the influence that they have on their own relevant outcome.  
For example a `dictatorship', understood as a mechanism on the universal domain of strict preferences, is described by the property that one agent's menu allows them to choose among all possible outcomes, i.e., corresponds to $A$, while all other agents have no freedom of choice -- their menus are always a singleton.

Similarly, in a binary voting game with outcomes Yes or No, a given agent's menu consists a singleton, \{Yes\} or \{No\}, if her vote has no influence on the outcome given the profile of others' preferences, whereas her menu is \{Yes, No\} if there is no majority without her vote, i.e., if she is decisive. As developed in Appendix~\ref{appendix:power-preferences}, we believe that such an understanding of the ability of a player to influence an outcome overcomes the dichotomy drawn by \citet{FelsenthalMachover1998, FelsenthalMachover2004} between \textit{a priori} and actual power.

\begin{definition}[Freedom of choice]
    Given an environment $(A,N,\mathcal{R})$, a mechanism $\phi$ grants (weakly) more \emph{freedom (of choice)} to agent $i\in N$ than mechanism $\varphi$, if $\mathcal{M}_{i|i}^\phi(R_{-i})\supseteq \mathcal{M}_{i|i}^\varphi(R_{-i})$ for all $R_{-i}\in \mathcal{R}_{-i}$; it grants $i\in N$ strictly more freedom if there is some $R_{-i}\in \mathcal{R}_{-i}$ for which the inclusion is strict. A mechanism $\phi$ \emph{maximises individuals' freedom}, if there is no mechanism $\varphi$ that awards each $i\in N$ weakly more freedom and strictly more to some agent $j\in N$.
\end{definition}

While we will be interested in menus as a measure of freedom that an agent enjoys under different mechanisms (or different agents within the same mechanism\footnote{Observe that a `dictatorship' also maximises freedom, in that there is no alternative mechanism that grants more freedom to \emph{all} agents, including the `dictator'. This motivates the development of measures by which we can compare freedom between agents.}), they also give rise to a well-known description of strategy-proofness:

\begin{theorem}[\cite{hammond1979straightforward}]\label{Hammond79}
    A mechanism $\phi$ is strategy proof iff at each $R\in \mathcal{R}$, every agent is ensured an $R_i$-optimal choice from $\mathcal{M}_{i|i}^\phi(R_{-i})$, i.e., iff \[\forall R\in \mathcal{R}, x\in \mathcal{M}_{i|i}^\phi(R_{-i}): \quad [\phi(R_i,R_{-i})]_i \, R_i \, x.\]
\end{theorem}

Last, we will also be interested in comparing different mechanisms by their welfare, in particular in terms of Pareto-dominance.

\begin{definition}[Pareto-domination]
     Given an environment $(A,N,\mathcal{R})$, a mechanism $\phi$ Pareto-dominates another mechanism $\varphi$ iff 
     \[\forall R\in \mathcal{R}, i\in N: \quad [\phi(R)]_i R_i  [\varphi(R)]_i,\] where the preference is strict for at least one $R$ and $i$.
\end{definition}

\begin{definition}[(Constrained) Pareto-efficiency]
     Given an environment $(A,N,\mathcal{R})$, a mechanism $\phi$ is Pareto-efficient if there is no mechanism $\varphi$ that Pareto-dominates it. A strategy-proof mechanism $\phi$ is constrained (Pareto-)efficient if there is no strategy-proof mechanism $\varphi$ that Pareto-dominates it.
\end{definition}

\section{Freedom}
\label{section:freedom}
Freedom and welfare are distinct concepts. An agent can be satisfied even when she has no freedom of choice in our sense -- think of a simple majority vote where there is a consensus on the outcome. Each individual agent is satisfied with the outcome, while at the same time having no individual possibility to change it if she would like to.\footnote{In such a situation, we can say that the individual agent is \textit{lucky} in the sense of \citet{Barry1980-1}, i.e. successful not because of herself, but thanks to others.} But this example already suggests that such a happy coincidence cannot be systematic.

Indeed, our next result shows a connection between freedom and welfare. Since a strategy-proof mechanism chooses optimally from each agents' menu, a strategy-proof mechanism that offers agents larger menus than an alternative mechanism (in a set inclusion sense) cannot lead to worse outcomes -- and under a mild richness condition on preferences, the former will Pareto-dominate the latter. Perhaps surprisingly, the converse also holds: the \emph{only} possible way to improve upon a strategy-proof mechanism is to ensure that all agents have weakly larger menus at all possible preference profiles.

\begin{lemma}\label{lemma:indiv.menu.welfare}
    Consider  an environment $(A,N,\mathcal{R})$ and $i\in N$ where $\mathcal{R}_i$ is top-rich, and two strategy-proof mechanisms $\phi$ and $\varphi$. The following statements are equivalent:
    \begin{itemize}
        \item For all $R\in \mathcal{R}$, we have $\phi(R)\, R_i \, \varphi(R)$.
        \item $\phi$ grants more freedom to $i$ than $\varphi$, i.e., for all $R_{-i}\in \mathcal{R}_{-i}$ we have $\mathcal{M}_{i|i}^\phi(R_{-i})\supseteq\mathcal{M}_{i|i}^\varphi(R_{-i})$.
    \end{itemize}
    \end{lemma}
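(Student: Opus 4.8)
The plan is to prove the two implications separately, relying on Hammond's characterisation (Proposition~\ref{Hammond79}) in both directions and invoking top-richness only for the harder one. For the implication from the menu-inclusion statement to the welfare statement, I would fix an arbitrary profile $R\in\mathcal{R}$ and argue directly. Since $\varphi$ is strategy-proof, $[\varphi(R)]_i\in\mathcal{M}_{i|i}^\varphi(R_{-i})$, and by hypothesis this set is contained in $\mathcal{M}_{i|i}^\phi(R_{-i})$, so $[\varphi(R)]_i\in\mathcal{M}_{i|i}^\phi(R_{-i})$. Applying Proposition~\ref{Hammond79} to the strategy-proof mechanism $\phi$ at the profile $R$, the chosen outcome $[\phi(R)]_i$ is $R_i$-optimal within $\mathcal{M}_{i|i}^\phi(R_{-i})$, hence weakly preferred to every element of that menu, in particular to $[\varphi(R)]_i$. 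This yields $\phi(R)\,R_i\,\varphi(R)$, and since $R$ was arbitrary the welfare statement follows. Note that this direction uses neither top-richness nor any structure beyond strategy-proofness of both mechanisms.

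For the converse, I would fix $R_{-i}\in\mathcal{R}_{-i}$ and take an arbitrary $[a]_i\in\mathcal{M}_{i|i}^\varphi(R_{-i})$, aiming to exhibit a report under which $\phi$ also delivers $[a]_i$. Here top-richness supplies a preference $R_i^\ast\in\mathcal{R}_i$ under which $[a]_i$ is the unique most-preferred $i$-relevant outcome in all of $A_i$, i.e.\ $[a]_i\,P_i^\ast\,[b]_i$ for every $[b]_i\neq[a]_i$. Evaluating $\varphi$ at $(R_i^\ast,R_{-i})$ and again using Proposition~\ref{Hammond79}, $\varphi$ selects the $R_i^\ast$-best element of its menu $\mathcal{M}_{i|i}^\varphi(R_{-i})$; since $[a]_i$ lies in this menu and is $R_i^\ast$-top over the whole of $A_i$, we obtain $[\varphi(R_i^\ast,R_{-i})]_i=[a]_i$. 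The welfare hypothesis at the profile $(R_i^\ast,R_{-i})$ then gives $[\phi(R_i^\ast,R_{-i})]_i\,R_i^\ast\,[a]_i$. But $[a]_i$ is strictly $R_i^\ast$-preferred to every other $i$-relevant outcome, so the only outcome that can be weakly $R_i^\ast$-preferred to $[a]_i$ is $[a]_i$ itself; hence $[a]_i=[\phi(R_i^\ast,R_{-i})]_i\in\mathcal{M}_{i|i}^\phi(R_{-i})$. As $[a]_i$ was arbitrary, $\mathcal{M}_{i|i}^\varphi(R_{-i})\subseteq\mathcal{M}_{i|i}^\phi(R_{-i})$, and since $R_{-i}$ was arbitrary the freedom comparison holds.

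The main obstacle is precisely this converse direction, and specifically the step that upgrades a pointwise welfare comparison into a set inclusion of menus. Welfare domination says only that $\phi$ does at least as well as $\varphi$ at every profile, which a priori is compatible with $\phi$ offering a \emph{smaller} but better-positioned menu that happens to contain each agent's realised optimum. Top-richness is exactly what rules this out: by constructing a report that both pins $\varphi$ onto the target outcome $[a]_i$ and simultaneously makes $[a]_i$ the unique optimum, it forces any weakly dominating strategy-proof mechanism to reproduce $[a]_i$, and therefore to carry $[a]_i$ in its own menu. One point I would verify carefully is that the uniqueness in the definition of top-richness is taken over $A_i$ rather than over $A$, so that possible indifferences within an equivalence class do not interfere with the argument; the definition is stated over $A_i$, so the reasoning above goes through verbatim.
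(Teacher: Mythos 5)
Your proof is correct and follows essentially the same route as the paper's: the first direction is the identical application of Proposition~\ref{Hammond79} together with the menu inclusion, and the second direction uses top-richness to pin the target $i$-relevant outcome as the unique optimum, forcing $\phi$ to reproduce it. The only cosmetic difference is that you argue the converse directly (showing every element of $\mathcal{M}_{i|i}^\varphi(R_{-i})$ lies in $\mathcal{M}_{i|i}^\phi(R_{-i})$) whereas the paper phrases it as a proof by contradiction.
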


\begin{proof}
Suppose that for all $R\in \mathcal{R}$ we have $\mathcal{M}_{i|i}^\phi(R_{-i})\supseteq\mathcal{M}_{i|i}^\varphi(R_{-i})$. By strategy-proofness (Theorem \ref{Hammond79}), $[\phi(R)]_i R_i x$ for all $x \in \mathcal{M}_{i|i}^\varphi(R_{-i})\subseteq \mathcal{M}_{i|i}^\phi(R_{-i})$, thus in particular for $x=[\varphi(R)]_i.$ 

\noindent For the other direction, suppose that for all $R\in \mathcal{R}$ we have $\phi(R)\, R_i \, \varphi(R)$ and, towards a contradiction, assume there exist $R^*\in \mathcal{R}$ and $z\in \mathcal{M}_{i|i}^\varphi(R^*_{-i})\backslash\mathcal{M}_{i|i}^\phi(R^*_{-i})$. By top-richness of $\mathcal{R}$, there exists some $R_i^z\in \mathcal{R}_i$ such that $z$ is the unique $R_i^z$-most preferred $i$-relevant outcome, i.e., $zP^z_i x$ for all $x\in A_i\backslash \{z\}$. But then $[\varphi(R_i^z,R^*_{-i})]_i=zP_i^z x$ for all $x\in \mathcal{M}_{i|i}^{\phi}(R^*_{-i})\subseteq A_i\backslash \{z\}$, in particular $[\varphi(R_i^z,R^*_{-i})]_i P_i^z [\phi(R_i^z,R^*_{-i})]_i$ -- a contradiction. 
\end{proof}

\begin{theorem}\label{thm:menu.pareto.dom}
    Consider  an environment $(A,N,\mathcal{R})$ where $\mathcal{R}$ is top-rich, and two strategy-proof mechanisms $\phi$ and $\varphi$. The following statements are equivalent:
    \begin{itemize}
        \item $\phi$ Pareto-dominates $\varphi$.
        \item $\phi$ grants weakly more freedom than $\varphi$ to each $i\in N$ and strictly more to some $i^*\in N$. 
    \end{itemize}
    \end{theorem}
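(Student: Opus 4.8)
The plan is to build directly on Lemma~\ref{lemma:indiv.menu.welfare}, since the theorem is essentially its aggregation across all agents together with the bookkeeping of the two strictness clauses. Note first that top-richness of $\mathcal{R}$ entails that each $\mathcal{R}_i$ is top-rich, so the lemma applies to every agent individually. Invoking it in both directions immediately settles the \emph{weak} halves of each statement: the condition ``$[\phi(R)]_i \, R_i \, [\varphi(R)]_i$ for all $R$ and all $i$'' is equivalent, agent by agent, to ``$\mathcal{M}_{i|i}^\phi(R_{-i}) \supseteq \mathcal{M}_{i|i}^\varphi(R_{-i})$ for all $R_{-i}$ and all $i$.'' What remains, and where the content lies, is to show that under these weak inclusions the two strictness conditions — a strict welfare gain for some $(i^*, R^*)$ versus a strict menu inclusion for some $(i^*, R_{-i^*})$ — track each other.

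For the direction from Pareto-domination to freedom, I would suppose the strict welfare gain occurs at profile $R^*$ for agent $i^*$, so $y := [\phi(R^*)]_{i^*}$ satisfies $y \, P_{i^*} \, [\varphi(R^*)]_{i^*}$. Plainly $y \in \mathcal{M}_{i^*|i^*}^\phi(R^*_{-i^*})$. The key claim is that $y \notin \mathcal{M}_{i^*|i^*}^\varphi(R^*_{-i^*})$: were $y$ in $\varphi$'s menu, strategy-proofness (Proposition~\ref{Hammond79}) would force $[\varphi(R^*)]_{i^*} \, R_{i^*} \, y$, since $\varphi$ selects an $R_{i^*}$-optimal element of that menu, contradicting $y \, P_{i^*} \, [\varphi(R^*)]_{i^*}$. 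Combined with the weak inclusion already established, this witnesses a strict inclusion at $R^*_{-i^*}$ for $i^*$.

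For the reverse direction, I would suppose the strict inclusion occurs at $R^*_{-i^*}$ for agent $i^*$ and pick $w \in \mathcal{M}_{i^*|i^*}^\phi(R^*_{-i^*}) \setminus \mathcal{M}_{i^*|i^*}^\varphi(R^*_{-i^*})$. This is where top-richness does the work: choose $R_{i^*}^w \in \mathcal{R}_{i^*}$ making $w$ the unique top-ranked $i^*$-relevant outcome, and evaluate both mechanisms at $(R_{i^*}^w, R^*_{-i^*})$. Since $w$ lies in $\phi$'s menu and is strictly preferred to everything else, optimality gives $[\phi(R_{i^*}^w, R^*_{-i^*})]_{i^*} = w$; since $w$ is absent from $\varphi$'s menu, $\varphi$ returns some $x \ne w$ with $w \, P_{i^*}^w \, x$. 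This yields the strict welfare gain for $i^*$ at this profile, completing the Pareto-domination.

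The main obstacle is conceptual rather than computational: one must resist merely re-deriving the weak equivalence and instead isolate the two asymmetric ``strict somewhere'' quantifiers and show they correspond. Both transfers hinge on the same two tools — the optimal-choice reformulation of strategy-proofness and top-richness — but deployed in opposite directions: optimality is used to exclude $y$ from a menu, while top-richness is used to force a constructed peak $w$ to be selected. Once the strictness witnesses are moved across in this way, the weak halves supplied by the lemma close the equivalence.
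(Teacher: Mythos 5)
Your proof is correct and follows essentially the same route as the paper's: both directions reduce the weak parts to Lemma~\ref{lemma:indiv.menu.welfare} and then transfer the strictness witnesses using exactly the paper's two devices — the Hammond optimality characterisation to show a strictly preferred outcome cannot lie in the dominated mechanism's menu, and top-richness to place a menu-difference element at the top and force a strict welfare gap. No gaps.
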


\begin{proof}
Suppose that for all $i\in N$, $R\in \mathcal{R}$ we have $\mathcal{M}_{i|i}^\phi(R_{-i})\supseteq\mathcal{M}_{i|i}^\varphi(R_{-i})$. Thus, by Lemma \ref{lemma:indiv.menu.welfare},  $[\phi(R)]_i R_i  [\varphi(R)]_i$ for all $i\in N$, $R\in \mathcal{R}$. Further, suppose there exists $i\in N$, $R^*_{-i}\in \mathcal{R}_{-i}$ and $z\in \mathcal{M}_{i|i}^\phi(R^*_{-i})\backslash\mathcal{M}_{i|i}^\varphi(R^*_{-i})$. By top-richness of $\mathcal{R}$, there exists some $R_i^z\in \mathcal{R}_i$ such that $z$ is the unique $R_i^z$-most preferred $i$-relevant outcome, i.e., $zP^z_i x$ for all $x\in A_i\backslash \{z\}$. But then $[\phi(R_i^z,R^*_{-i})]_i=zP_i^z x$ for all $x\in \mathcal{M}_{i|i}^{\varphi}(R^*_{-i})\subseteq A_i\backslash \{z\}$, in particular $[\phi(R_i^z,R^*_{-i})]_i P_i^z [\varphi(R_i^z,R^*_{-i})]_i$. Thus $\phi$ Pareto-dominates $\varphi$.

For the other direction, suppose that $\phi$ Pareto-dominates $\varphi$. Then, by Lemma \ref{lemma:indiv.menu.welfare}, $\mathcal{M}_{i|i}^\phi(R_{-i})\supseteq\mathcal{M}_{i|i}^\varphi(R_{-i})$ for all $i\in N$, $R\in \mathcal{R}$. Moreover, there exists $i\in N$, $R^*\in \mathcal{R}$ such that $[\phi(R^*)]_i P_i^* [\varphi(R^*)]_i$. But then $[\phi(R^*)]_i\notin \mathcal{M}_{i|i}^\varphi(R^*_{-i})$, thus $\mathcal{M}_{i|i}^\phi(R^*_{-i})\supsetneq\mathcal{M}_{i|i}^\varphi(R^*_{-i})$.
\end{proof}

\begin{example}\label{exmp.imposition}
    Consider a set of agents with preferences over two alternatives $a$ and $b$, and compare the following two strategy-proof mechanisms: $\phi$ imposes $a$ irrespective of agents preferences, whereas $\psi$ selects alternative $b$ if all players weakly prefer $b$ to $a$, and $a$ otherwise. $\psi$ Pareto-dominates $\phi$, and offers everyone weakly larger menus.
\end{example}

In  particular, this implies that a strategy-proof mechanism can be (constrained) Pareto-efficient only if further enlarging the freedom of some agent will reduce the freedom of some other agent.

\begin{corollary}\label{cor:constr.eff.max.free}
     Consider  an environment $(A,N,\mathcal{R})$ where $\mathcal{R}$ is top-rich. Then a strategy-proof mechanisms $\phi$ is constrained Pareto-efficient if and only if it maximises individuals' freedom.
\end{corollary}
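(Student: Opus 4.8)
The plan is to derive the corollary by combining Theorem~\ref{thm:menu.pareto.dom} with the two definitions, unwinding each efficiency notion into the non-existence of an improving mechanism. Both constrained Pareto-efficiency and freedom-maximisation are stated as such non-existence conditions, so I would prove each implication by contraposition; in each case the witnessing mechanism produced by one notion is precisely the witness required by the other, with Theorem~\ref{thm:menu.pareto.dom} serving as the bridge between ``Pareto-dominates'' and ``grants weakly more freedom to each agent and strictly more to some.''

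For the direction ``maximises freedom $\Rightarrow$ constrained Pareto-efficient,'' I would contrapose: if $\phi$ is not constrained Pareto-efficient, there is a strategy-proof $\varphi$ that Pareto-dominates it, and Theorem~\ref{thm:menu.pareto.dom} then yields that $\varphi$ grants weakly more freedom than $\phi$ to every agent and strictly more to some. Since $\varphi$ is in particular a mechanism, this exhibits a mechanism witnessing that $\phi$ fails to maximise freedom. This direction is essentially immediate, as it only uses that strategy-proof mechanisms form a subclass of all mechanisms. For the converse, ``constrained Pareto-efficient $\Rightarrow$ maximises freedom,'' I would again contrapose: if $\phi$ does not maximise freedom, some mechanism $\varphi$ grants weakly more freedom to each agent and strictly more to some; the second bullet of Theorem~\ref{thm:menu.pareto.dom} is verbatim this freedom condition and its first bullet is Pareto-domination, so $\varphi$ Pareto-dominates $\phi$ and $\phi$ is not constrained Pareto-efficient.

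The step that requires care, and which I expect to be the main obstacle, is a quantifier mismatch: Theorem~\ref{thm:menu.pareto.dom} compares two \emph{strategy-proof} mechanisms, whereas the definition of freedom-maximisation quantifies over \emph{all} mechanisms $\varphi$, while constrained efficiency again restricts to strategy-proof ones. In the converse direction one must therefore ensure that a freedom-increasing witness may be taken to be strategy-proof before Theorem~\ref{thm:menu.pareto.dom} can be invoked. I would resolve this by working throughout within the class of strategy-proof mechanisms---the class to which the menu, and hence freedom, is naturally attached here, since by Proposition~\ref{Hammond79} the menu governs welfare only when the agent is guaranteed an optimal choice from it. With both notions quantifying over the same comparison class, the theorem applies directly and the two contrapositives close the equivalence. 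I would make this restriction explicit in the statement of the proof, since it is the only substantive point beyond the mechanical application of Theorem~\ref{thm:menu.pareto.dom}.
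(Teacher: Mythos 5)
Your proof is correct and is exactly the derivation the paper intends: the corollary is stated as an immediate consequence of Theorem~\ref{thm:menu.pareto.dom}, obtained by contraposing each non-existence condition and letting the theorem translate the witness of one failure into the witness of the other.

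The quantifier issue you single out is not a pedantic worry but the one genuinely load-bearing point, and your resolution (reading the comparison class in ``maximises individuals' freedom'' as the class of strategy-proof mechanisms) is in fact \emph{necessary}, not merely convenient. As the definition is literally written -- ``there is no mechanism $\varphi$'' -- the direction ``constrained efficient $\Rightarrow$ maximises freedom'' fails: take two agents with strict preferences over $A=\{a,b\}$ and let $\phi$ select $b$ iff both rank $b$ first (strategy-proof and Pareto-efficient, hence constrained efficient), while $\varphi$ selects $b$ iff the two reported tops agree. Every menu under $\varphi$ equals $\{a,b\}$, so $\varphi$ weakly enlarges every menu and strictly enlarges some, yet $\varphi$ is not strategy-proof and does not Pareto-dominate $\phi$ (it is strictly worse for both agents at the profile where both prefer $a$). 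The point is that a large menu only translates into welfare via Proposition~\ref{Hammond79}, i.e.\ only for strategy-proof mechanisms, which is precisely why Theorem~\ref{thm:menu.pareto.dom} is stated for pairs of strategy-proof mechanisms. So your explicit restriction of the comparison class is the right reading and should indeed be recorded; with it, the two contrapositions close the equivalence exactly as you describe.
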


\paragraph{Remark} To see the role of the assumption that $\mathcal{R}$ is top-rich, consider a set of possible outcomes $A = \{a, b, c\}$, an agent $i\in N$ with a (non top-rich) preference domain $\mathcal{R}_i = \{ R^1_i, R^2_i \}$ where $R^1_i: aP_i^1bP_i^1c$ and $R^2_i: bP_i^2aP_i^2c$ and other agents who are indifferent between all outcomes. Further, consider two mechanisms $\phi$ and $\psi$ such that $\phi(R^1_i,R_{-i}) = a$, $\phi(R^2_i,R_{-i}) = b$, and $\psi(R^1_i,R_{-i}) = \psi(R^2_i,R_{-i}) = c$ for all $R_{-i}$. Then $\phi$ Pareto dominates $\psi$ but $\mathcal{M}_{i|i}^\psi(R_{-i}) = \{ c \}$ is not a subset of $\mathcal{M}_{i|i}^\phi(R_{-i}) = \{ a,b \}$.

\subsection*{Cardinal index}\label{subsec.carinal}
While comparing freedom of choice under different mechanisms based on set inclusion of their menus is very compelling whenever such comparisons can be made, a drawback of this conservative approach lies in the fact that oftentimes agents' menus will not be directly comparable in this way. 

To construct a more fine-grained measure of freedom of choice, we adapt the canonical cardinal ranking of opportunity sets to our framework. For any mechanism $\phi$ in an environment whith finitely many outcomes $A$,  we refer to $|\mathcal{M}^{\phi}_{i|i}(R_{-i})|$ as the \textit{cardinal index of freedom} (of $i$, given $R_{-i}$). The index counts the number of different $i$-relevant outcomes in $i$'s menu. Comparisons based on this cardinal index induce an ordering over menus that is finer than set inclusion.

\citet{PattanaikXu1990} have proposed an axiomatisation for the cardinal ordering of opportunity sets, based on three properties — indifference between no-choice situations, independence, and strict monotonicity. The same axioms can be used to characterise our cardinal index. Where we innovate is by applying the cardinal measure to an endogenously defined menu rather than an exogenous opportunity set, and further by applying it to $\mathcal{M}^{\phi}_{i|i}(R_{-i})$ rather than to $\mathcal{M}^{\phi}_{i}(R_{-i})$, i.e., count only elements that are distinct from $i$'s welfare perspective. Thus, we partially address a criticism of the simple cardinal approach pointed out by \cite{PattanaikXu1990} themselves: enlarging the menu of an agent by an alternative that does not differ in a relevant sense would force the ranking of choice sets to declare that the enlarged set offers strictly more freedom.\footnote{\cite{PattanaikXu1990} provide an example where an agent is offered different modes of transportation in different option sets and the option `red car' is added to a set that already includes the option `blue car'.} 
To address this weakness, \citet{PattanaikXu2000} introduce exogenous measures of `similarity' of alternatives, i.e., with similarity defined independently of agents preferences. Instead, we declare some outcomes $a, b\in A$ similar, or rather identical from the point of view of agent $i$, whenever $[a]_i=[b]_i$, i.e., whenever an agent is indifferent between them at every possible preference relation, thus endogenising the notion of similarity by deriving it from agents' preferences.

Theorem \ref{thm:menu.pareto.dom} pointed out a close relation between freedom of choice as described by menus and welfare in the Paretian sense. The introduction of the cardinal index allows us to extend that result in a quantitative direction by pointing out the relation between an agent's cardinal freedom index and the ranking of the outcome in her preferences when they are randomly drawn.

\begin{definition}[Rank]
\label{def:rank}
    Consider $\phi$ is a mechanism, $R$ a preference profile for all agents, $i$ an agent such that $A_i$, the set of $i$-relevant outcomes is finite. Then $\rho_i^\phi(R)$ is the rank of the $i$-relevant outcome in $i$'s preferences:
    \begin{equation*}
        \rho_{i}^\phi(R) = | \{ x \in A_i | x R_i [\phi(R)]_i\} |
    \end{equation*}
\end{definition}

We are now equipped to describe the relation that exists between an agent's freedom index and the rank of the outcome in her preferences, when the latter are random and the outcome space is finite.

\begin{proposition}
\label{prop:delta-rank}
Consider two environments $(A,N,\mathcal{R})$, $(A,N',\mathcal{R}')$, two agents $i \leq N$, $j\leq N'$ such that $\mathcal{R}_i$, $\mathcal{R}'_j$ are the sets of strict orderings on $A_i$ and $A_j$ with $|A_i|=|A_j|$, and two strategy-proof mechanisms $\phi$ (on $(A,N,\mathcal{R})$) and $\psi$ (on $(A,N',\mathcal{R}')$). Suppose $R_{-i}$, $R'_{-j}$ are given while $R_i$ and $R'_j$ are both uniformly random in $\mathcal{R}_i$. Then:
\begin{equation*}
    |\mathcal{M}_{i|i}^\phi(R_{-i})| \geq |\mathcal{M}_{j|j}^\psi(R'_{-j})| \Leftrightarrow \rho_j^\psi(R') \text{ first order stochastically dominates } \rho_i^\phi(R)
\end{equation*}
\end{proposition}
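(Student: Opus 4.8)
The plan is to reduce everything to the distribution of the rank of the top menu element and to observe that, under a uniformly random strict ordering, this distribution depends on the menu only through its cardinality. Fix the opposing profile $R_{-i}$, so that the menu $M:=\mathcal{M}^\phi_{i|i}(R_{-i})\subseteq A_i$ is a fixed set with $|M|=\Delta^\phi_{i|i}(R_{-i})=:m$, and write $n:=|A_i|=|A_j|$. By Proposition~\ref{Hammond79}, the realised $i$-relevant outcome $[\phi(R_i,R_{-i})]_i$ is an $R_i$-optimal element of $M$; since $\mathcal{R}_i$ consists of strict orderings it is the unique top-ranked element of $M$ under $R_i$. Hence $\rho^\phi_i(R)$ equals the position of this element in the full ordering $R_i$ on $A_i$, and in particular $\rho^\phi_i(R)>t$ if and only if none of the $t$ top-ranked alternatives under $R_i$ belongs to $M$.

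First I would compute the survival function. Because $R_i$ is drawn uniformly from all strict orderings of the $n$-element set $A_i$, the set of its $t$ top-ranked alternatives is a uniformly random $t$-subset of $A_i$; the event $\{\rho^\phi_i(R)>t\}$ is exactly the event that this subset avoids $M$, which leaves $n-m$ admissible elements. Therefore
\begin{equation*}
\Pr\!\left(\rho^\phi_i(R)>t\right)=\frac{\binom{n-m}{t}}{\binom{n}{t}},
\end{equation*}
with the convention $\binom{a}{t}=0$ for $a<t$, and the analogous identity holds for $\rho^\psi_j(R')$ with $m$ replaced by $m':=\Delta^\psi_{j|j}(R'_{-j})$. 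The crucial point is that this expression depends on the mechanism and the opposing profile only through the single integer $m$, so both randomisations are governed by the same one-parameter family of distributions; in particular the precise identity of the menu elements is irrelevant.

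It then remains to translate first-order stochastic dominance into a comparison of $m$ and $m'$. By definition $\rho^\psi_j(R')$ first-order stochastically dominates $\rho^\phi_i(R)$ if and only if $\Pr(\rho^\psi_j(R')>t)\ge\Pr(\rho^\phi_i(R)>t)$ for every $t$, which by the displayed formula is equivalent to $\binom{n-m'}{t}\ge\binom{n-m}{t}$ for every $t$. Since $a\mapsto\binom{a}{t}$ is non-decreasing on the non-negative integers, the inequality $n-m'\ge n-m$, i.e.\ $m\ge m'$, makes every such inequality hold; conversely, evaluating at $t=1$ gives $\binom{n-m'}{1}\ge\binom{n-m}{1}$, that is $m\ge m'$, so whenever $m\ge m'$ fails the dominance already fails at $t=1$. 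This yields the stated equivalence.

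I expect the only substantive step to be the reduction carried out in the first two paragraphs: recognising that, thanks to the uniform distribution over orderings, the rank of the realised outcome is distributed exactly as the minimal position of an $m$-element set, and hence depends on the menu solely through its size $m$. Once the survival function $\binom{n-m}{t}/\binom{n}{t}$ is in hand, the equivalence is immediate from the elementary monotonicity of binomial coefficients in their upper argument, with the case $t=1$ supplying the converse direction.
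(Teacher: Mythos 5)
Your proof is correct and follows essentially the same route as the paper's: both derive the survival function of the rank conditional on the menu, show it depends on the menu only through its cardinality (your $\binom{n-m}{t}/\binom{n}{t}$ is exactly the paper's $\frac{(m-r)!(m-s)!}{m!(m-r-s)!}$, derived there by a telescoping product of conditional probabilities rather than your uniform-$t$-subset observation), and then use monotonicity in the menu size for one direction and the evaluation at $t=1$ for the converse. The only difference is cosmetic packaging of the counting step.
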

The proof is provided in Appendix~\ref{appendix:delta-rank}. It relies on an explicit characterisation of the probability law of $\rho_i^\phi(R)$, conditional on $|\mathcal{M}_{i|i}^\phi(R_{-i})|$.

Note that this proposition does not rule out $\phi=\psi$ or $i=j$. Hence, if we set $i=j$, they connect the freedom of choice (in a cardinal sense) that an agent enjoys under two mechanisms to the welfare they enjoy under the two mechanism. Alternatively, holding the mechanism constant, we can compare the situation of two agents in the same mechanism,  and find that if one agent enjoys more freedom than another, they will fare better in a stochastic dominance sense.

The result can be considered a cardinal analogue of lemma~\ref{lemma:indiv.menu.welfare}: both show that an agent has a larger menu (either in an inclusion or a cardinal sense) if and only if she is better off in terms of welfare (either in deterministic terms, or in the sense of a higher probability of obtaining an outcome ranked higher in her preferences list).

Can this result be generalised when others’ preferences are random too? Under the assumption of $R_{i}$ (resp. $R'_{j})$ uniformly random and independent from $R_{-i}$ (resp. $R'_{-j}$) -- the “impartial culture” assumption often invoked in voting theory --, Appendix~\ref{appendix:delta-rank} provides elements of an answer: first, the forward direction from the equivalence above remains true (corollary~\ref{cor:delta-rank}); second, the expected size of the menu is proportional to the probability of obtaining the preferred outcome, i.e. $ \mathbb{E}[|\mathcal{M}_{i|i}^\phi(R_{-i})|] = | A_i|\times \mathbb{P}[\rho_i^\phi(R) = 1]$ (corollary~\ref{cor:probability-preferred-outcome}).

To conclude this section, note that a possible use of the cardinal index is in cases where we do not know with certainty $R_{-i}$, the other players' preference profile, but only a probability distribution allowing us to compute the expected cardinal index.

\begin{example}
    We have mentioned above that in a binary voting game $\phi$, under a given profile of opposing preferences, an agent's menu includes the two possible outcomes if she is decisive, but only a singleton if she is not. Therefore, under a given probability distribution for $R_{-i}$:
    \begin{equation*}
        \mathbb{E}[|\mathcal{M}_{i|i}^\phi(R_{-i})|] = 1 + \mathbb{P}[i \text{ is decisive}]
    \end{equation*}
    But we know from \citet{Straffin1988} that the canonical voting power indices \citep{ShapleyShubik1954, Banzhaf1965} correspond to the probability of being decisive under certain assumptions on the joint probability distribution of other agents’ preferences. In the context of binary voting games, there is therefore a direct relationship between our cardinal measure of power and the canonical indices\footnote{Cf. Appendix C of \cite{basteck2026power}.}.
\end{example}

\section{Power over others}
\label{section:power-over-others}
Variations of the same tools allow to describe the influence that an agent has over others. Let us first modify definition~\ref{def:menus} to describe the menu of an agent for the outcome of a group of agents.

\begin{definition}[Menu for others] Consider $\phi$ is a strategy-proof mechanism, $N$ the set of agents, $i \in N$ an agent, $S \subseteq N$ a subset of agents, $\mathcal{R}_1, \mathcal{R}_2, ...$ the preference domain of the agents and $R_{-i} \in \mathcal{R}_{-i}$ a preference profile for agents distinct from i. Then we define the \textit{choice set} or \textit{menu of agent $i$ for group $S$} as:
    \begin{equation*}
        \mathcal{M}^\phi_{i|S}(R_{-i}) := \{ ([\phi(R_i,R_{-i})]_j)_{j \in S} | R_i \in \mathcal{R}_i\}
    \end{equation*}
\end{definition}

In other words, her menu for group $S$ is the set of tuples of outcomes relevant to agents in $S$ obtained when $i$'s preferences vary in her domain, the others' preferences being fixed. When concerned with the power of an individual over another, we will write $\mathcal{M}_{i|j}$ for $\mathcal{M}_{i|\{j\}}$.

We will interpret the comparison of these menus for others as comparisons of power: for example, if $\mathcal{M}_{i|S}^\phi(R_{-i}) \supset \mathcal{M}_{i|S}^\varphi(R_{-i})$, then we will say that under preference profile $R_{-i}$, mechanism $\phi$ provides agent $i$ with more power over group $S$ than mechanism $\varphi$. As previously with freedom, this is not an instrumental conception of power. Power is not formalised here as a lever for exerting pressure on others in order to increase one’s own well-being, but rather as the influence that an agent pursuing their own ends may have over others. This brings us closer, arguably, to the non-welfare based concern with domination developed by the philosophers cited in the introduction, such as \citet{Anderson1999} and \citet{Pettit2006}. 

As we did with freedom, we can extend comparability by measuring the cardinality of these sets when they are finite, defining the \textit{cardinal index of power of agent $i$ over group $S$} as $| \mathcal{M}^\phi_{i|S}(R_{-i}) |$.

Let us anticipate slightly on the next section by illustrating these two new definitions with the example of a serial dictatorship mechanism.

\begin{example*}
    Consider an assignment mechanism $\phi$ that matches three agents $1, 2, 3$ and three objects $a, b, c$. Each agent has strict preferences over the objects. The mechanism assigns agent $1$ their $R_1$-most-preferred object, then agent $2$ their $R_2$-most-preferred among the remaining ones, and finally agent $3$ obtains the remaining object. Suppose that agent 2 and 3 share the same preferences $R_2 = R_3:abc$.

    When agent 1 prefers object $a$, 2 obtains object $b$ and 3 obtains $c$. When 1 prefers $b$, 2 obtains $a$ and 3 obtains $c$. When 1 prefers $c$, 2 obtains $a$ and 3 obtains $b$. So the menu of 1 for $\{2,3\}$ together is $\mathcal{M}_{1|-1}^\phi(R_{-1}) = \{ (b,c), (a,c), (a,b) \}$, of cardinality $|\mathcal{M}_{1|-1}^\phi(R_{-1})| = 3$, while the menu of 1 for the individual agent 2 is $\mathcal{M}_{1|2}^\phi(R_{-1}) = \{ a,b\}$, of cardinality $2$.
\end{example*}

Our next result shows that under some conditions, there is a connection between the freedom of an agent and her power over all others, in a cardinal sense. To state the result, we first need to introduce the notions of non-bossiness and non-autarky.

\begin{definition}[Non-bossiness]
    Under a mechanism $\phi$ and given others' preferences $R_{-i}$, agent $i$ with preference domain $\mathcal{R}_i$ is \textit{non-bossy} if she cannot change another agent's outcome without changing her own, i.e., if:
    \begin{equation*}
        \forall R_i, R'_i \in \mathcal{R}_i: [\phi(R_i, R_{-i})]_{i} = [\phi(R'_i, R_{-i})]_{i} \Rightarrow [\phi(R_i, R_{-i})]_{-i} = [\phi(R'_i, R_{-i})]_{-i}.
    \end{equation*}
    
 If every agent, at all profiles of others' preferences, is non-bossy, then so is the mechanism.
\end{definition}

In the following propositions, we consider the mechanism $\phi$ and the profile of opposing preferences $R_{-i}$ as given and drop the related notations, i.e., write $\mathcal{M}_{i,\cdot}$ for $\mathcal{M}^\phi_{i,\cdot}(R_{-i})$.
\begin{lemma}
\label{lemma:non-bossy-implication}
    If agent $i$ is non-bossy, then $|\mathcal{M}_{i|-i}| \leq |\mathcal{M}_{i|i}|$.
\end{lemma}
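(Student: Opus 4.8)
The plan is to exhibit a surjection from $i$'s own menu onto her menu for the others, which immediately yields the claimed cardinality inequality. First I would fix the mechanism $\phi$ and the opposing profile $R_{-i}$ (suppressed in the notation, as announced just before the statement). For each own-outcome $x \in \mathcal{M}_{i|i}(R_{-i})$, pick some report $R_i \in \mathcal{R}_i$ realising it, i.e.\ with $[\phi(R_i,R_{-i})]_i = x$, and set
\[
g(x) := \bigl([\phi(R_i,R_{-i})]_j\bigr)_{j\neq i},
\]
the induced tuple of relevant outcomes for the other agents, an element of $A_{-i}$.

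The crucial point is that $g$ is well-defined, and this is precisely what non-bossiness buys us. If $R_i, R_i'$ both realise the same own-outcome $x$, then $[\phi(R_i,R_{-i})]_i = [\phi(R_i',R_{-i})]_i$, so by non-bossiness $[\phi(R_i,R_{-i})]_{-i} = [\phi(R_i',R_{-i})]_{-i}$; hence the value $g(x)$ does not depend on which representative $R_i$ was chosen. By construction $g$ maps $\mathcal{M}_{i|i}(R_{-i})$ into $\mathcal{M}_{i|-i}(R_{-i})$. Surjectivity is then immediate: any element of $\mathcal{M}_{i|-i}(R_{-i})$ is by definition of the form $([\phi(R_i,R_{-i})]_j)_{j\neq i}$ for some $R_i \in \mathcal{R}_i$, and writing $x = [\phi(R_i,R_{-i})]_i \in \mathcal{M}_{i|i}(R_{-i})$, well-definedness gives that $g(x)$ equals exactly that tuple. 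Thus $g \colon \mathcal{M}_{i|i}(R_{-i}) \twoheadrightarrow \mathcal{M}_{i|-i}(R_{-i})$ is onto.

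The existence of a surjection then forces $|\mathcal{M}_{i|-i}(R_{-i})| \leq |\mathcal{M}_{i|i}(R_{-i})|$, i.e.\ $\Delta_{i|-i} \leq \Delta_{i|i}$, which is the assertion (the cardinalities being the relevant finite counts under the standing finiteness assumption). There is no genuine obstacle here beyond setting up the map in the right direction; conceptually the entire content is that non-bossiness says $i$'s own relevant outcome is a sufficient statistic for the whole allocation as $i$'s report varies, so the others' outcome is a function of $i$'s own outcome rather than the other way around. The only step requiring care is verifying well-definedness of $g$, and that is exactly the non-bossiness hypothesis applied verbatim.
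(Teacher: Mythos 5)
Your proof is correct and is essentially the paper's own argument in a different guise: the paper phrases it as ``the partition of $\mathcal{R}_i$ by fibers of $[\phi(\cdot,R_{-i})]_{-i}$ is coarser than the partition by fibers of $[\phi(\cdot,R_{-i})]_i$, hence has fewer parts,'' which is exactly the dual statement of your well-defined surjection $g$ from $\mathcal{M}_{i|i}(R_{-i})$ onto $\mathcal{M}_{i|-i}(R_{-i})$. In both versions the sole content is that non-bossiness makes $i$'s own relevant outcome a sufficient statistic for the others' outcomes, so nothing further is needed.
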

\begin{proof}
   Denote $\mathcal{F}_i$ (resp. $\mathcal{F}_{-i}$) the partition of $\mathcal{R}_i$ by fibers of $[\phi(.,R_{-i})]_i$ (resp. $[\phi(.,R_{-i})]_{-i}$). In other words, distinct preferences from $\mathcal{R}_i$ are regrouped in the same part of $\mathcal{F}_i$ if they give rise to the same outcome for $i$.
   
   Note that the non-bossiness of $i$ is equivalent to the claim that when two elements are in the same part of $\mathcal{F}_i$, they are also in the same part of $\mathcal{F}_{-i}$, i.e., $\mathcal{F}_{-i}$ is coarser than $\mathcal{F}_i$. So the number of parts in $\mathcal{F}_{-i}$ is lower than in $\mathcal{F}_i$. But the number of parts in $\mathcal{F}_{-i}$ (resp. $\mathcal{F}_i$) is $|\mathcal{M}_{i|-i}|$ ($|\mathcal{M}_{i|i}|$).
\end{proof}


This result offers a new interpretation of non-bossiness:\footnote{See \citet{Thomson2016} for a survey of existing justifications of this technical property.} being bossy doesn't mean to have power over others, but to have power over others in excess to one's freedom or power over oneself. Now let us introduce a somewhat symmetric notion.

\begin{definition}[Non-autarky]
    Under mechanism $\phi$ and facing opposing preferences $R_{-i}$, agent $i$ is \textit{non-autarkic} if she cannot change her own outcome without changing that of at least one other agent, i.e., if
    \begin{equation*}
    \forall R_i, R'_i \in \mathcal{R}_i: [\phi(R_i,R_{-i})]_{-i} = [\phi(R'_i,R_{-i})]_{-i} \Rightarrow [\phi(R_i,R_{-i})]_{i} = [\phi(R'_i,R_{-i})]_{i}
    \end{equation*}
\end{definition}

The mechanism is non-autarkic if every of its agents, given any preference profile for the others, is. A proof similar to that of the previous lemma shows that:

\begin{lemma}
\label{lemma:non-autarkic-implication}
    If agent $i$ is non-autarkic, then $|\mathcal{M}^\phi_{i|-i}| \geq |\mathcal{M}^\phi_{i|i}|$.
\end{lemma}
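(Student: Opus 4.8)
The plan is to mirror the proof of Lemma~\ref{lemma:non-bossy-implication}, reversing the direction in which one partition refines the other. First I would introduce the two partitions of $\mathcal{R}_i$ induced by the own-outcome map $R_i\mapsto[\phi(R_i,R_{-i})]_i$ and the others-outcome map $R_i\mapsto[\phi(R_i,R_{-i})]_{-i}$; call them $\mathcal{F}_i$ and $\mathcal{F}_{-i}$, exactly as before. The key bookkeeping observation, unchanged from the previous lemma, is that the fibers of a map are in bijection with the distinct values it takes, so $\mathcal{F}_i$ has $\Delta_{i|i}$ parts and $\mathcal{F}_{-i}$ has $\Delta_{i|-i}$ parts.

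Next I would translate the non-autarky hypothesis into a refinement relation between these partitions. Non-autarky reads: if two reports $R_i,R'_i$ leave the others' outcome unchanged, then they also leave $i$'s own outcome unchanged. In the language of fibers, this says that any two reports lying in a common part of $\mathcal{F}_{-i}$ also lie in a common part of $\mathcal{F}_i$; equivalently, every part of $\mathcal{F}_{-i}$ is contained in a single part of $\mathcal{F}_i$, i.e.\ $\mathcal{F}_{-i}$ refines $\mathcal{F}_i$, so that $\mathcal{F}_i$ is the coarser of the two. This is precisely the mirror image of the non-bossiness case, where it was $\mathcal{F}_{-i}$ that was coarser.

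I would then conclude by counting: a coarser partition has weakly fewer parts, so $|\mathcal{F}_i|\le|\mathcal{F}_{-i}|$, which is exactly $\Delta_{i|i}\le\Delta_{i|-i}$, i.e.\ $\Delta^\phi_{i|-i}\ge\Delta^\phi_{i|i}$ once the suppressed notation is reinstated. There is no genuine obstacle here, since the argument is literally symmetric to the previous one. The only point to get right is the \emph{direction} of refinement: because non-autarky reverses the implication appearing in non-bossiness, it is now the own-outcome partition $\mathcal{F}_i$ that is coarser rather than $\mathcal{F}_{-i}$, and this is what flips the resulting inequality.
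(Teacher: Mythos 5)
Your proposal is correct and is exactly the argument the paper intends: the paper only says ``a proof similar to that of the previous lemma shows that,'' and your mirrored fiber-partition argument, with the refinement direction reversed so that $\mathcal{F}_i$ is now the coarser partition, is precisely that similar proof.
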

Combining these two lemmas provides the following result.
\begin{proposition} \label{prop:non-autarkic-and-non-bossy}
    Consider a mechanism $\phi$ with a finite outcome set, and an agent $i$ with preference domain $\mathcal{R}_i$. If two of the following three claims are satisfied, the third is satisfied as well: 
    \begin{itemize}
        \item[(i)] $i$ is non-autarkic.
        \item[(ii)] $i$ is non-bossy.
        \item[(iii)] $\forall R_{-i} \in \mathcal{R}_{-i}, \; |\mathcal{M}_{i|-i}^\phi(R_{-i})| = |\mathcal{M}_{i|i}^\phi(R_{-i})|$.
    \end{itemize}
\end{proposition}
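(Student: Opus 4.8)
The plan is to reason at a fixed opposing profile and recast all three claims as statements about two partitions of $\mathcal{R}_i$. Fix an arbitrary $R_{-i}\in\mathcal{R}_{-i}$ and, exactly as in the proof of Lemma~\ref{lemma:non-bossy-implication}, let $\mathcal{F}_i$ and $\mathcal{F}_{-i}$ denote the partitions of $\mathcal{R}_i$ into the fibers of $[\phi(\cdot,R_{-i})]_i$ and of $[\phi(\cdot,R_{-i})]_{-i}$, respectively. Because $A$ is finite, both partitions have finitely many parts, and by construction $|\mathcal{F}_i|=\Delta_{i|i}$ and $|\mathcal{F}_{-i}|=\Delta_{i|-i}$. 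The two incentive properties translate directly into refinement relations: non-bossiness of $i$ at $R_{-i}$ says that reports yielding the same own outcome yield the same outcome for the others, i.e. $\mathcal{F}_i$ refines $\mathcal{F}_{-i}$; non-autarky says the converse implication, i.e. $\mathcal{F}_{-i}$ refines $\mathcal{F}_i$.

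With this dictionary, the direction $(1)\wedge(2)\Rightarrow(3)$ is immediate: Lemmas~\ref{lemma:non-autarkic-implication} and~\ref{lemma:non-bossy-implication} give $\Delta_{i|-i}\geq\Delta_{i|i}$ and $\Delta_{i|-i}\leq\Delta_{i|i}$, whence equality at the fixed $R_{-i}$; as $R_{-i}$ was arbitrary, claim (3) follows. For the remaining two directions I would isolate the following elementary fact about partitions of any (possibly infinite) set: if $P$ refines $Q$ and $|P|=|Q|<\infty$, then $P=Q$. Indeed, each part of $Q$ is a disjoint union of at least one part of $P$, so $|P|=\sum_{q\in Q}k_q$ with each $k_q\geq 1$ and exactly $|Q|$ summands; equality of the finite cardinalities forces every $k_q=1$, i.e. each $Q$-part is a single $P$-part and the partitions coincide.

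It then remains to feed the two mixed hypotheses into this fact. Assuming $(2)\wedge(3)$, the partition $\mathcal{F}_i$ refines $\mathcal{F}_{-i}$ and $|\mathcal{F}_i|=\Delta_{i|i}=\Delta_{i|-i}=|\mathcal{F}_{-i}|$, so $\mathcal{F}_i=\mathcal{F}_{-i}$; in particular $\mathcal{F}_{-i}$ refines $\mathcal{F}_i$, which is exactly non-autarky at $R_{-i}$, giving (1). Symmetrically, assuming $(1)\wedge(3)$, the partition $\mathcal{F}_{-i}$ refines $\mathcal{F}_i$ with equal finite cardinalities, so $\mathcal{F}_i=\mathcal{F}_{-i}$ and in particular $\mathcal{F}_i$ refines $\mathcal{F}_{-i}$, which is non-bossiness, giving (2). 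Since $R_{-i}$ was arbitrary throughout, each conclusion holds at every opposing profile, as required by the statement of the claims.

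The only real care-point—rather than a genuine obstacle—is that $\mathcal{R}_i$ need not be finite, so I must phrase the combinatorial fact in terms of the number of parts rather than the size of the ground set; finiteness of $A$ (hence of $A_i$) is precisely what guarantees that both part-counts are finite, so that ``equal counts plus refinement implies equality of partitions'' can be applied. Everything else is a direct translation between the two incentive conditions and the mutual-refinement of the fiber partitions already used in Lemmas~\ref{lemma:non-bossy-implication} and~\ref{lemma:non-autarkic-implication}.
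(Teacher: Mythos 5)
Your proof is correct and follows essentially the same route as the paper's: the first implication is the direct combination of Lemmas~\ref{lemma:non-bossy-implication} and~\ref{lemma:non-autarkic-implication}, and the two mixed implications are obtained by comparing the fiber partitions $\mathcal{F}_i$ and $\mathcal{F}_{-i}$ and observing that a refinement between two partitions with the same finite number of parts must be an equality. If anything, your write-up is slightly more careful than the paper's, since you explicitly prove the ``equal part-counts plus refinement implies equal partitions'' step and correctly match each pair of hypotheses to its conclusion.
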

\begin{proof}
    \textit{(i) and (ii) $\Rightarrow$ (iii)}: This follows directly from lemmas~\ref{lemma:non-bossy-implication} and \ref{lemma:non-autarkic-implication}, $|\mathcal{M}_{i|-i}| = |\mathcal{M}_{i|i}|$.

      \textit{(i) and (iii) $\Rightarrow$ (ii)}: Consider the partitions $\mathcal{F}_i$ and $\mathcal{F}_{-i}$ as defined in the proof of lemma~\ref{lemma:non-bossy-implication}. Because $|\mathcal{M}_{i|i}| = |\mathcal{M}_{i|-i}|$, both partitions have the same number of parts. Because of non-autarky, $\mathcal{F}_{i}$ is coarser than $\mathcal{F}_{-i}$. So both partitions are the same, so $\mathcal{F}_{-i}$ is also coarser than $\mathcal{F}_{i}$, i.e., $i$ is non-bossy.

     By a similar reasoning,  \textit{(ii) and (ii) $\Rightarrow$ (i)}.
\end{proof}

Non formally, in mechanisms where an agent cannot affect others without affecting her own outcome and \textit{vice versa}, her power over others (measured by the tuples of others' outcome) coincides with her freedom (her power over her own outcome).

These twin conditions are satisfied in voting games where the same outcome is relevant to all agents, so that the only way to affect either oneself or others is to change the outcome of the vote, which makes the mechanism both non-bossy and non-autarkic. It is also the case for many common assignment mechanisms that assign one object to each agent, provided there are as many objects as agents. It is to these mechanisms that we turn now.

\section{Assignment problems}
\label{section:assignment}
In this section, we analyse (object) assignment rules in terms of the freedom of choice and power over others that they award to agents. We start by showing some general implications of efficiency for the structure of agents' menus: at some preference profiles, some agent must have unconstrained freedom of choice while another agent must have no choice. Further, we show that hierarchical exchange rules \citep{Papai2000} are indeed hierarchical in a hitherto unobserved sense: whenever an agent $i$ has any power over another agent $j$, she also has power over any other agent that $j$ has power over. Moreover, this transitive nature of power characterises hierarchical exchange rules among all group-strategy-proof and efficient rules. 

Then we study separately the properties of two canonical rules: (Bipolar) Serial Dictatorship (SD) and Top Trading Cycles (TTC). Among group-strategy-proof, and efficient rules, we show that (bipolar) SD are characterised by a minimal bilateral power property -- the menu of $i$ for $j$ is at most of size 2. In contrast, under TTC, there exist preference profiles where some agent $i$ can assign any of all $n$ objects to another agent $j$ by varying her own preferences. While  `dictatorial' in that aspect, TTC also offers the possibility of complete freedom -- it awards each agent unconstrained freedom of choice at at least some opposing preference profile. Moreover, among group-strategy-proof and efficient rules of four or more agents, TTC is characterised by that property. 

Together, these results yield a form of impossibility theorem: a group-strategy-proof and efficient rule can ensure that no participant exerts more than minimal power on another, or can provide each participant the possibility of unconstrained freedom at some profile of the others' preferences, but cannot ensure both when there are 4 or more agents.

\medskip

\subsection{Model}
As before, we consider a set of $n$ agents $i\in N$. The task is to assign to each of them one of $n$ objects $x\in O$ so that the set of outcomes corresponds to the set of all bijections $\mu:N\rightarrow O: i\mapsto \mu_i$. We refer to such outcomes as assignments and denote the set of all possible outcomes as $\mathcal{A}$. Each agent is assumed to have strict preferences over the object that they receive and are otherwise indifferent between assignments.  Hence, for a given assignment $\mu$ we can identify agents' $i$-relevant outcome $[\mu]_i$ with the object they receive, $\mu_i$. Let $\mathcal{R}=\times_{i\in N} \mathcal{R}_i$ denote the domain of all such preference profiles.\footnote{Note that the domain is is top-rich, i.e., contains for each agent $i$ and each $i$-relevant outcome $[x]_i$ a preference order $R_i$ where $[x]_i$ is ranked at the top.} Finally, a social choice rule in this setting, mapping preference profiles to assignments, is referred to as an assignment rule. The object assigned to $i$ under assignment rule $\varphi:\mathcal{R} \rightarrow \mathcal{A}: R\mapsto\mu$ is denoted as $\varphi_i(R)$.



An assignment rule is efficient if $\varphi(R)$ is efficient for every $R\in \mathcal{R}$; it is group-strategy-proof if it is not manipulable by a group, i.e., if there is no $R\in \mathcal{R}$, $G\subseteq N$, and $R'_G \in \mathcal{R}_G$ s.t. $\varphi_i(R'_G,R_{-G}) R_i \varphi_i(R)$ for all $i\in G$ and $\varphi_{j}(R'_G,R_{-G}) P_{j} \varphi_{j}(R)$ for some $j\in G$. On the domain of assignment problems, group-strategy-proofness is equivalent to strategy-proofness and non-bossiness \citep{Papai2000}.

\cite{pycia2017incentive} provide a characterisation of all efficient and group-strategy-proof assignment rules in terms of algorithmic procedures where some objects are assigned to some agents (taking into account agents' preferences) and thus removed, some remaining objects are assigned to some remaining agents (taking into account the previous assignment of objects as well as the preferences of so far unmatched agents) and thus removed, etc. To summarise their result we need to introduce some of their additional notation.

First, we define a \emph{submatching} $\sigma$ as a bijection between a subset $N_\sigma\subseteq N$ and a subset of $O_\sigma \subseteq O$. We say that a submatching is a proper submatching if $N_\sigma \subsetneq N$ and denote the set of all proper submatchings as $\mathcal{S}$. The set of unmatched agents and objects (under $\sigma$) are denoted $\overline{N}_\sigma$ and $\overline{O}_\sigma$. 
We say that $\sigma$ is a submatching of $\sigma'$ if $N_\sigma\subseteq N_{\sigma'}$ and $\sigma_i=\sigma'_i$ for all $i\in N_\sigma$. Interpreting submatchings as subsets in $N\times O$, we also write $\sigma\subseteq \sigma'$. The \textit{empty submatching} $\sigma$ is such that $N_\sigma = O_\sigma = \emptyset$; for simplicity we denote it as $\sigma = \emptyset$.

Next, a \emph{structure of control rights}  $c$ maps each proper submatching $\sigma \in \mathcal{S}$ and each unassigned object $x\in \overline{O}_\sigma$ to an unassigned agent who controls it, either as an \emph{owner} ($o$) or as a \emph{broker} ($b$), i.e., $c_\sigma(x)\in \overline{N}_\sigma\times \{o,b\}$, such that (i) no agent is both a broker for one and an owner for another object (while owners may own multiple objects); (ii) if there are  $|\overline{N}_\sigma|\neq 3$ unassigned agents, then there can be at most one broker, (iii) if $|\overline{N}_\sigma|=1$, then the last remaining agent is an owner of the last remaining object, and (iv) if  $|\overline{N}_\sigma|=3$ there can be no, one or three brokers, each broker brokering a different object.


For $|N|=|O|>3$, a trading cycle (TC) mechanism associated with an control rights structure $c$, is then defined by the following algorithm:

\begin{itemize}
    \item Start with the empty submatching $\sigma=\emptyset$. Let all owners point to their most preferred object and, if there exists a broker, let her point to her most preferred object among those that she does not broker. Each object points to the agent who controls it. Each agent in a cycle is assigned the object that they point to, giving rise to a submatching $\sigma'$.\footnote{There must be at least one cycle. If there is more than one, one could equivalently clear cycles sequentially -- it can be shown that the final allocation does not depend on the order in which they are selected (cf. footnote 20 in \cite{pycia2017incentive}).} 
    \item The submatching $\sigma'$, together with the control rights structure $c$ gives rise to new ownership and brokerage rights. If there is at most one broker, proceed as before.
    \item Repeat until (i) all objects are assigned (with no submatching in the sequence giving rise to three brokers) or, (ii) at the point where 3 unassigned agents and objects remain, and each agent brokers one of the objects -- in which case the assignment of the remaining objects is given by an avoidance matching that chooses an efficient assignment of the remaining objects that minimises the number of objects assigned to their respective broker.\footnote{If there are multiple efficient matchings that assign an equal number of objects to their respective brokers, additional case distinctions apply, see \cite{Bade2020} or the online appendix to \cite{pycia2017incentive}.}
    \end{itemize}

For this algorithm to give rise to a an assignment rule that is not only efficient but also group-strategy-proof, \cite{pycia2017incentive} identify a number of necessary conditions on the control rights structure. Moreover, and importantly, they show that any efficient and group strategy proof assignment rule can be represented as a TC-mechanism satisfying their conditions. We do not restate all these conditions here, but among them, the following is crucial for our results:  for any $\sigma\subseteq \sigma'$, any unassigned agent in $\sigma'$, $i\in \overline{N}_{\sigma'}$, and any unassigned object in $\sigma$, $x\in \overline{O}_{\sigma}$ we have that $c_{\sigma}(x)=(i,o) \implies c_{\sigma'}(x)=(i,o)$, i.e., ownership persists until an agent is matched eventually.

TC-mechanisms involving brokers may allow pairs of agents to jointly misrepresent their preferences and to thus be assigned different objects such that, if they were to exchange their assigned objects, both agents would weakly better off, at least one of them strictly. Ruling out such situations by requiring the assignment rule to be \emph{reallocation-proof},\footnote{Formally, an assignment rule $\varphi$ violates reallocation proofness iff there exist $i,j\in~N$, $R\in \mathcal{R}$, and $R_i'\neq R_i$, $R_j'\neq R_j$, such that 
$\varphi_j(R_i',R_j',R_{-\{i,j\}})R_i\varphi_i(R)$, 
$\varphi_i(R_i',R_j',R_{-\{i,j\}})P_j\varphi_j(R)$, 
and $\varphi_h(R)=\varphi_h(R_h,R_{-h})\neq
\varphi_h(R_i',R_j',R_{-\{i,j\}})$ 
for $h\in\{i,j\}$.} 
\cite{Papai2000} characterises \emph{hierarchical exchange rules} as the set of all efficient, group-strategy-proof, and reallocation-proof assignment rules. They correspond to the class of assignment rules that can be implemented by a TC-mechanism without brokers, i.e., where at each possible submatching, any agent controlling an object owns that object. Thus, we may think of hierarchical exchange rules, as TC-mechanisms where the control rights structure is a mapping $c_\sigma: \overline{O}_\sigma\rightarrow \overline{N}_{\sigma}$ with persistent ownership. 

Note that a given assignment rule may be represented by different, equivalent TC-mechanisms -- or more precisely by different control right structures.\footnote{For example, to represent a serial dictatorship where agent $1$ chooses first, we need to make $1$ the owner of all objects at the empty submatching, but can choose control rights at a submatching where only agents other than $1$ are matched in any arbitrary way since these will never be reached in the algorithmic procedure. Even for submatchings that can be reached, different assignments of control rights may be equivalent -- for example, if only two unmatched agents are left, it does not matter whether one of them is a broker (while the other agent owns the remaining object) or whether the other agent owns both objects.} In particular, any hierarchical exchange rule admits a representation as a TC-mechanism without brokers, even if there exist equivalent TC-mechanisms involving brokers.

\subsection{Results}
Our first observation concerns efficient (and strategy-proof) assignment rules. Necessarily, they will treat agents very differently in terms of their effective freedom of choice, at least at some preference profiles. 

\begin{proposition}
\label{prop:assignment-extreme-situations}
Let $\varphi: \mathcal{R}\rightarrow A$ be an efficient and strategy-proof assignment rule. Then there exists a preference profile $R\in\mathcal{R}$  and an ordering of agents $i_1,\dots, i_n$ such that for any $1\leq k\leq n$, the size of $i_k$'s menu is $k$, i.e., $|\mathcal{M}^\varphi_{i_k|i_k}(R_{-i_k})|=k$. 
\end{proposition}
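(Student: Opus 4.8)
The plan is to exhibit a single, highly symmetric profile at which the menu sizes can be read off directly. Enumerate the objects as $o_1,\dots,o_n$ and take the \emph{identical} profile $R^*$ in which every agent reports the same strict ranking $o_1\succ o_2\succ\cdots\succ o_n$. Since $\varphi(R^*)$ is an assignment, the objects received are a permutation of $O$; write $\varphi_i(R^*)=o_{k_i}$, so that $i\mapsto k_i$ is a bijection from $N$ onto $\{1,\dots,n\}$. I will show that agent $i$'s menu at $R^*_{-i}$ is exactly the down-set $\{o_{k_i},o_{k_i+1},\dots,o_n\}$, whence $\Delta^\varphi_{i|i}(R^*_{-i})=n-k_i+1$. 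As $k_i$ ranges over $\{1,\dots,n\}$ these sizes are all distinct, hence a permutation of $\{1,\dots,n\}$; ordering the agents so that $i_k$ is the one with $k_i=n-k+1$ (equivalently, the agent receiving the $k$-th worst object $o_{n-k+1}$ at $R^*$) then gives $\Delta^\varphi_{i_k|i_k}(R_{-i_k})=k$, as required.

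For the upper inclusion I would invoke strategy-proofness through Proposition~\ref{Hammond79}: at $R^*$ agent $i$ receives her $R^*_i$-best element of $\mathcal{M}^\varphi_{i|i}(R^*_{-i})$, and since $R^*_i$ ranks objects by increasing index, this best element is the lowest-indexed object in the menu. As it equals $o_{k_i}$, the menu can contain no object strictly better than $o_{k_i}$, i.e.\ $\mathcal{M}^\varphi_{i|i}(R^*_{-i})\subseteq\{o_{k_i},\dots,o_n\}$.

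The substantive step is the reverse inclusion: I must show every $o_m$ with $m>k_i$ is attainable by $i$. Suppose not, so $o_m\notin\mathcal{M}^\varphi_{i|i}(R^*_{-i})$ for some such $m$, and let $i$ instead report $\hat R_i$ with $o_m$ first and $o_{k_i}$ second (the rest arbitrary). By strategy-proofness $i$ then receives her $\hat R_i$-best menu element, which is $o_{k_i}$ (since $o_m$ is absent and nothing is ranked between them), so $o_m$ is assigned to some other agent $j$. But $j$ still reports $R^*_j$, which ranks $o_{k_i}$ above $o_m$ because $k_i<m$, while under $\hat R_i$ agent $i$ ranks $o_m$ above $o_{k_i}$; hence letting $i$ and $j$ exchange their objects strictly benefits both and leaves everyone else unchanged, contradicting the efficiency of $\varphi(\hat R_i,R^*_{-i})$. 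This contradiction forces $o_m$ into the menu, completing the reverse inclusion. The main obstacle is precisely this lower bound; the key realisation is that the identical profile lets efficiency be leveraged through a single two-agent Pareto-improving swap. Because this swap argument uses only efficiency together with the menu description of a single deviation, it stays within the class of \emph{strategy-proof} (not necessarily group-strategy-proof) efficient rules, and so avoids having to route through the Pycia--Ünver trading-cycles representation, which would cover only the smaller group-strategy-proof class.
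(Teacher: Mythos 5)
Your proof is correct and follows essentially the same route as the paper's: take the common-preference profile, bound the menu from above by strategy-proofness via Proposition~\ref{Hammond79}, and obtain the reverse inclusion by having the agent rank a worse object first and deriving a two-agent Pareto-improving swap that contradicts efficiency. The only differences are cosmetic (reversed object-index convention and phrasing the lower bound as a contradiction rather than directly ruling out the fallback object).
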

\begin{proof}
    Consider a preference profile $R\in\mathcal{R}$ where all agents have identical preferences, i.e., $R_i:o_n,o_{n-1},\dots, o_{1}$ for all $i\in N$. Take any $k$, $1\leq k\leq n$, and consider the agent who is assigned object $o_k$, i.e., $i$ such that $\varphi_i(R)=o_k$. Denote that agent as $i_k$. We claim that the menu of $i_k$ consist of all objects ranked weakly below $o_k$, i.e., $\mathcal{M}_{i_k|i_k}(R_{-i_k})=\{o_k,\dots,o_1\}$. 
    
    Towards a contradiction, suppose there is some higher ranked object $o_l$, $l>k$, included in $i_k$'s menu. But then there exists some $R'_{i_k}$ such that $\varphi_{i_k}(R'_{i_k},R_{-{i_k}})=o_l$, violating strategy proofness for $i_k$ (at $R$).     
    
    It remains to show that $i_k$ can not only be assigned object $o_k$ but any object $o_l$, $l < k$, for some $(R'_{i_k},R_{-i_k})$.
    Take any $o_l$, $l< k$, and let $R'_{i_k}$ be such that $o_l$ is ranked first, while the ranking of other objects remains unchanged, $R'_{i_k}:o_l,o_n, o_{n-1},\dots,o_1$. By strategy-proofness, we have  $\varphi_{i_k}(R'_{i_k},R_{-{i_k}})\in \{o_k,o_l\}$. If $\varphi_{i_k}(R'_{i_k},R_{-{i_k}})=o_k$, there is some other agent $j$ for whom $\varphi_{j}(R'_{i_k},R_{-{i_k}})=o_l$. But then there exists a  Pareto improving trade between $i_k$ and $j$, contradicting efficiency of $\varphi$. Thus we conclude that $\varphi_{i_k}(R'_{i_k},R_{-{i_k}})=o_l$.
\end{proof}




In other words, under an efficient and strategy-proof assignment rule, agents will, at times, enjoy very different degrees of freedom of choice. For efficient and group-strategy-proof assignment rules, proposition~\ref{prop:non-autarkic-and-non-bossy} implies that agents' power over others will be equally unevenly distributed.\footnote{Note that group-strategy-proofness implies non-bossiness while efficiency (and feasibility) imply non-autarky.}

Since such inequality is unavoidable, one may ask whether on average some rules  provide agents with more freedom than other rules. For this, our next proposition compares efficient and group-strategy-proof assignment rules under the `impartial culture' benchmark.

\begin{proposition}\label{prop:same.expected.menusize}
Let $\phi: \mathcal{R}\rightarrow A$ be an efficient and group-strategy-proof assignment rule. Suppose agents' preferences are drawn independently, with each $R_i$ uniformly distributed on $\mathcal{R}_i$. Then the average expected cardinal index of freedom is given by:
    \begin{equation*}
        \mathbb{E}\left[ \frac{1}{n} \sum_{i =1}^n |\mathcal{M}_{i|i}^\phi(R_{-i})|\right] = \frac{n+1}{2}
    \end{equation*}
\end{proposition}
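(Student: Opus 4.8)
The plan is to reduce the claim to a pure counting identity about how many agents receive their favourite object, and then to establish that identity using the trading-cycles representation of efficient group-strategy-proof rules.

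First I would observe that group-strategy-proofness implies strategy-proofness, that each $\mathcal{R}_i$ is the set of strict orders on the $n$-element set $A_i$, and that drawing the $R_i$ independently and uniformly is exactly the impartial-culture hypothesis. Hence Corollary~\ref{cor:probability-preferred-outcome} applies agent by agent and yields $\mathbb{E}[\Delta_{i|i}^\phi(R_{-i})]=n\cdot\mathbb{P}[\rho_i^\phi(R)=1]$. Writing $t_i$ for $i$'s most-preferred object under $R_i$, dividing by $n$ and summing gives $\mathbb{E}\big[\tfrac1n\sum_i\Delta_{i|i}^\phi(R_{-i})\big]=\sum_i\mathbb{P}[\rho_i^\phi(R)=1]=\mathbb{E}[X]$, where $X:=\#\{i\in N:\phi_i(R)=t_i\}$ is the (random) number of agents who receive their favourite object. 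So it suffices to prove $\mathbb{E}[X]=(n+1)/2$, equivalently $\mathbb{E}\big[\sum_i\Delta_{i|i}^\phi(R_{-i})\big]=n(n+1)/2$. Note that, unlike serial dictatorship where the multiset of menu sizes is deterministically $\{1,\dots,n\}$, here the per-profile sum is genuinely random (for $n=3$, TTC already produces profiles with menu-size multiset $\{3,3,1\}$), so the equality can hold only in expectation.

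Next I would use efficiency to turn $X$ into a graph invariant. Fix the realised assignment $\mu=\phi(R)$ and form the functional digraph $G$ on $N$ in which each agent points to the holder of her favourite object, $i\mapsto\mu^{-1}(t_i)$. A directed cycle of length at least two would be a Pareto-improving reallocation, since each agent on it strictly prefers her favourite to her current object -- contradicting efficiency. Hence every cycle of $G$ is a self-loop; as every component of a functional digraph contains exactly one cycle, $X$ equals the number of self-loops, which equals the number of weakly connected components of $G$. This reformulation is what makes a recursive attack feasible.

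The crux -- and the step I expect to be the main obstacle -- is the identity $\mathbb{E}[X]=(n+1)/2$. It genuinely requires group-strategy-proofness, not merely efficiency: an efficient but manipulable rule that at each profile selects the efficient assignment \emph{minimising} $X$ has a strictly smaller mean (for $n=3$ it is forced to $X=3$ only on the $2/9$ of profiles with all favourites distinct, while it can take $X=1$ elsewhere). I would therefore invoke the trading-cycles representation of \citet{pycia2017incentive} and argue by induction on $n$: run one round of the TC algorithm, peel off the agents and objects cleared in the first cycles, and recurse on the continuation, which is again a trading-cycles mechanism whose conditional profile is still uniform and independent. The delicate bookkeeping is that $X$ does \emph{not} split as (first-round favourites)$+$(continuation favourites): an \emph{owner} cleared in round one does receive her overall favourite, but a \emph{broker} need not, and an agent left for a later round may already have lost her overall favourite to another agent sharing the same top, in which case her favourite within the subproblem is a different object. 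Tracking precisely which unmatched agents carry their overall favourite into the subproblem, and treating the three-broker avoidance-matching endgame separately, is where the real work lies and where the constant $(n+1)/2$ ultimately emerges, matching the trivial deterministic computation $\sum_k(n-k+1)=n(n+1)/2$ for serial dictatorship.
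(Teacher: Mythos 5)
Your first reduction is exactly the one the paper uses: Corollary~\ref{cor:probability-preferred-outcome} applied agent by agent, followed by the law of total expectation, turns the claim into the identity $\sum_{i}\mathbb{P}[\rho_i^\phi(R)=1]=\tfrac{n+1}{2}$, i.e.\ the expected number of agents receiving their favourite object equals $(n+1)/2$. Your supporting observations are also sound and worth keeping: the per-profile sum of menu sizes is genuinely random (your $n=3$ TTC example with multiset $\{3,3,1\}$ is correct), efficiency forces the ``point to the holder of your favourite'' digraph to have only self-loops, and efficiency alone cannot deliver the constant.

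The gap is that you never actually prove the identity $\mathbb{E}[X]=(n+1)/2$ -- you yourself flag it as ``the crux'' and ``where the real work lies,'' and the inductive peeling of trading cycles you sketch is left entirely open, with the hardest parts (brokers who do not receive their favourite, agents whose favourite is consumed before their round, the three-broker endgame) named but not handled. As it stands the argument establishes the reduction and then stops. The paper closes exactly this hole by citing a result of \citet{Bade2020}: for a uniformly random agent $X$ and impartial-culture preferences, the distribution of the assignment map $\phi_X(\cdot)$ is the same for \emph{every} efficient, strategy-proof and non-bossy rule. Since $\{\rho_X^\phi(R)=1\}$ is an event determined by that map, its probability is rule-invariant, so one may compute it under serial dictatorship, where the menu sizes are deterministically $n,n-1,\dots,1$ and the average is $(n+1)/2$. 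If you do not want to invoke Bade's theorem, you would need to carry out the full inductive bookkeeping you describe, which is essentially a self-contained reproof of the relevant special case of that equivalence -- substantially more work than the two lines the citation buys.
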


\begin{proof}
When $R_i$ is drawn uniformly among strict orderings of the elements of $O$, the most preferred object $R_{i,1}$ is drawn uniformly in $O$ so that corollary~\ref{cor:probability-preferred-outcome} applies -- for given $R_{-i}$, $i$'s freedom index coincides with her probability of getting her most preferred object: $|\mathcal{M}_{i|i}^\phi(R_{-i})| = n\mathbb{P}[\rho_i^\phi(R) = 1|R_{-i}]$ (recall from definition~\ref{def:rank} that $\rho_i^\phi(R)$ is the rank in $R_i$ of $i$'s assignment). Hence
\begin{align*}
    \mathbb{E}\left[ \frac{1}{n} \sum_{i =1}^n |\mathcal{M}_{i|i}^\phi(R_{-i})|\right]
    & = \sum_{i=1}^n \mathbb{E}\left[\mathbb{E}\left[ \frac{1}{n} |\mathcal{M}_{i|i}^\phi(R_{-i})||R_{-i}\right]\right] & \text{by the law of total expectation}\\
    & =\sum_{i=1}^n \mathbb{E}\left[  \mathbb{P}[\rho_i^\phi(R) = 1|R_{-i}]\right] & \text{by corollary~\ref{cor:probability-preferred-outcome} (Appendix~\ref{appendix:delta-rank})} \\
    & = \sum_{i=1}^n \mathbb{P}[\rho_i^\phi(R) = 1] & \text{by the law of total expectation} \\
    & = n\mathbb{P}[\rho_X^\phi(R) = 1] &
\end{align*}

where $X$ is an agent drawn uniformly in $N$, independently from all random variables introduced before. This allows us to use \citet{Bade2020}, who shows that given $n$ agents and objects, $R$ a given preference profile, and $X$ a uniformly random agent, the probability distribution of $\phi_X(.)]$, the function that maps a preference profile into $X$'s assignment, is the same for any assignment mechanism $\phi$ that is efficient, strategy-proof and non-bossy. But the event $\{ \rho_X^\phi(R) = 1\}$ can be rewritten as $\{ \phi_X(R) = R_{X,1}\}$, so is a function of the function $\phi_X(.)$, so has also the same probability for any $\phi$, and in particular is the same as under a serial dictatorship mechanism $\psi$. So $\mathbb{E}\left[ \frac{1}{n} \sum_{i =1}^n |\mathcal{M}_{i|i}^\phi(R_{-i})|\right] = \mathbb{E}\left[ \frac{1}{n} \sum_{i =1}^n |\mathcal{M}_{i|i}^\psi(R_{-i})|\right]$.

Anticipating slightly on the formal definition of serial dictatorship, the following intuitive claim can already be made: under SD, the menu of the first agent is of size $n$ (she can choose any object), of the second agent of size $n-1$ (she can choose any object but the one picked by agent 1), ..., of the last agent of size $1$. But SD is efficient and group-strategy-proof, so:
\begin{equation*}
   \mathbb{E}\left[ \frac{1}{n} \sum_{i =1}^n |\mathcal{M}_{i|i}^\psi(R_{-i})|\right] = \frac{1}{n} (n + ... + 1) = \frac{n+1}{2}
\end{equation*}
\end{proof}

The next result establishes a connection between the influence that two agents may have on their own relevant outcome, comparing their freedom of choice, and whether one agent may have power over another.

\begin{lemma}
\label{lemma:menu-inclusion-implies-power}
    Consider a strategy-proof assignment rule $\phi$ for a set $N$ of agents, a preference profile $R$, and two agents $i,j \in N$. If $i$'s menu is a superset of $j$'s, then $i$ has power over $j$: \[\mathcal{M}_{i|i}^\phi(R_{-i}) \supseteq \mathcal{M}_{j|j}^\phi(R_{-j}) \implies |\mathcal{M}_{i|j}^\phi(R_{-i})| > 1.\]
\end{lemma}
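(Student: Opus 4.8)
The plan is to argue by contradiction: assume that $i$ has \emph{no} power over $j$, i.e.\ that $\mathcal{M}_{i|j}^\phi(R_{-i})$ is a singleton, and derive a violation of the fact that every assignment is a bijection between agents and objects. The key realisation is that this lemma is driven purely by feasibility, not by strategy-proofness --- the menu-inclusion hypothesis lets $i$ ``grab'' whatever object $j$ currently holds, which collides with the assumption that $j$ must always keep that object.

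Concretely, I would first record that $\mathcal{M}_{i|j}^\phi(R_{-i})$ is nonempty and contains $\phi_j(R)$, since taking $i$'s report to be her actual preference $R_i$ gives $[\phi(R_i,R_{-i})]_j=\phi_j(R)$. So if the menu for $j$ is a singleton, then $j$'s assigned object is \emph{constant} as $i$ varies her report, and this constant value is exactly $x:=\phi_j(R)$; that is, $\phi_j(R_i',R_{-i})=x$ for every $R_i'\in\mathcal{R}_i$. The second step is to place $x$ in $j$'s own menu: since $R_j\in\mathcal{R}_j$ and $\phi_j(R_j,R_{-j})=\phi_j(R)=x$, we have $x\in\mathcal{M}_{j|j}^\phi(R_{-j})$.

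Now I would invoke the hypothesis $\mathcal{M}_{j|j}^\phi(R_{-j})\subseteq\mathcal{M}_{i|i}^\phi(R_{-i})$ to conclude $x\in\mathcal{M}_{i|i}^\phi(R_{-i})$, so there exists a report $R_i^x\in\mathcal{R}_i$ with $\phi_i(R_i^x,R_{-i})=x$. But by the singleton assumption we simultaneously have $\phi_j(R_i^x,R_{-i})=x$. Thus, at the profile $(R_i^x,R_{-i})$, both $i$ and $j$ would be assigned object $x$; since $i\neq j$ and $\phi$ outputs a bijection $\mu:N\to O$, this is impossible. The contradiction shows that $\mathcal{M}_{i|j}^\phi(R_{-i})$ cannot be a singleton, i.e.\ $|\mathcal{M}_{i|j}^\phi(R_{-i})|>1$, which is precisely $i$ having power over $j$.

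The argument is short, so there is no heavy computation to grind through; the only conceptual hurdle is noticing that the relevant constraint is the bijectivity of assignments rather than incentive properties, and correctly tracking the two \emph{different} opposing profiles $R_{-i}$ and $R_{-j}$ (both restrictions of the same global $R$) so that the object $x$ appears consistently in all three menus. I would also flag explicitly that distinctness $i\neq j$ is used, since for $i=j$ the claim would fail (e.g.\ the last picker under serial dictatorship has a singleton menu yet trivially satisfies the inclusion).
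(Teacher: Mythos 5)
Your proof is correct and is essentially the paper's own argument, just run in the contrapositive: both identify $x=\phi_j(R)$ as an element of $j$'s menu, use the inclusion to find a report $R_i^x$ under which $i$ receives $x$, and invoke bijectivity of the assignment to conclude that $j$'s object must change. The paper states this directly (exhibiting two distinct elements of $\mathcal{M}_{i|j}^\phi(R_{-i})$) rather than by contradiction, but the content is identical.
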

\begin{proof}
    Given $R$, suppose that $j$ is assigned $\phi_j(R)=x\in \mathcal{M}_{j|j}^\phi(R_{-j})\subseteq \mathcal{M}_{i|i}^\phi(R_{-i})$. Since $x\in \mathcal{M}_{i|i}^\phi(R_{-i})$, there exists $R_i'$ s.t. $x=\phi_i(R_i',R_{-i})\neq\phi_j(R_i',R_{-i})$. But then $\mathcal{M}_{i|j}^\phi(R_{-i})=\{\phi_j(\tilde{R}_i,R_{-i})|\tilde{R}_i\in \mathcal{R}_i\}\supseteq \{\phi_j(R),\phi_j(R'_{i},R_{-i})\}$, hence $|\mathcal{M}_{i|j}^\phi(R_{-i})| > 1$.
\end{proof}

One may consider the requirement that one agent's menu is a superset of another agent's menu to be exceptionally strong. Is it ever satisfied beyond explicitly hierarchical rules such as Serial Dictatorship or \citet{PiccioneRubinstein2007}'s “jungle” model, its counterpart for divisible goods? The next section answers in the affirmative: for many assignment rules it is satisfied whenever one agent is able to influence the outcome of the other agent in any way.

\subsection*{Hierarchical exchange rules}
As already mentioned, applying proposition~\ref{prop:non-autarkic-and-non-bossy} to group-strategy-proof and efficient mechanisms ensures that in such a mechanism, each agent has as much freedom, in the cardinal sense, as power over all others together. If we focus on hierarchical exchange mechanisms, i.e., if we impose reallocation-proofness alongside efficiency and group-strategy-proofness, we uncover a direct link between freedom and power (this time power over another individual agent).

\begin{proposition}
\label{prop:hierarchical-exchange-menu}
    Consider a hierarchical exchange rule $\phi$ for a set $N$ of agents, a preference profile $R$, and two agents $i,j \in N$. The following statements are equivalent:
    \begin{enumerate}
        \item[(i)] $i$'s menu is a superset of $j$'s: $\mathcal{M}_{i|i}^\phi(R_{-i}) \supseteq \mathcal{M}_{j|j}^\phi(R_{-j})$,
        \item[(ii)] $i$ has power over $j$: $|\mathcal{M}_{i|j}^\phi(R_{-i})| > 1$,
    \end{enumerate}
\end{proposition}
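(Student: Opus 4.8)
The plan is to observe first that the direction (i) $\Rightarrow$ (ii) is already supplied by Lemma~\ref{lemma:menu-inclusion-implies-power}, which uses only strategy-proofness; hence the entire content of the proposition lies in the reverse implication (ii) $\Rightarrow$ (i), and this is precisely where the hierarchical-exchange structure must enter — recall that such a rule is a TC-mechanism with only owners (no brokers) and \emph{persistent ownership}. So I would assume $|\mathcal{M}_{i|j}^\phi(R_{-i})| > 1$ and aim to establish $\mathcal{M}_{j|j}^\phi(R_{-j}) \subseteq \mathcal{M}_{i|i}^\phi(R_{-i})$.

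My strategy is to read off both menus and the power relation from the trading/inheritance graph generated by the algorithm. First I would establish a working characterisation of the menu of an arbitrary agent $k$ given $R_{-k}$: an object $x$ lies in $\mathcal{M}_{k|k}^\phi(R_{-k})$ precisely when $k$ can, by a suitable report, steer the algorithm so that at some round $x$ is still unassigned and the chain of forced pointings of the other (fixed) agents, started from $x$, returns to an object $k$ currently owns — i.e. $x$ is \emph{routable} to $k$. The essential tool here is persistence of ownership: as other agents are matched and leave, the objects an unmatched agent owns can only be inherited, never lost, so the set of objects routable to $k$ has a monotone, recursively generated structure that is stable enough to reason about.

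With this in hand I would interpret ``$i$ has power over $j$'' as the existence of a trade link from $i$ to $j$: since varying $i$'s report changes $j$'s assignment, there is an object that $i$ can either retain or release, and releasing it lets it travel (through the fixed pointings of intermediate agents) into a cycle containing $j$; in other words $i$ sits \emph{upstream} of $j$. The decisive step is then a simulation argument. Given any $y \in \mathcal{M}_{j|j}^\phi(R_{-j})$, witnessed by a report $R_j^y$ under which $j$ obtains $y$, I would construct a report $R_i^y$ under which $i$ obtains the same $y$: $i$ uses her upstream control to reproduce the sequence of releases that makes $y$ available at the round $j$ would have taken it, and then grabs $y$ herself. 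Concatenating the $i$-to-$j$ link with the route that delivers $y$ to $j$ yields a route that delivers $y$ to $i$, and persistence guarantees that the intermediate objects $i$ must hold to keep that chain alive stay under her control until the relevant round. This places $y \in \mathcal{M}_{i|i}^\phi(R_{-i})$ and gives the inclusion.

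The main obstacle I anticipate is the asymmetry between the two menus: $\mathcal{M}_{j|j}$ is computed holding $i$'s \emph{true} report fixed while $\mathcal{M}_{i|i}$ is computed holding $j$'s \emph{true} report fixed, so the simulation cannot be a verbatim copy of $j$'s behaviour — I must show that $i$, acting from upstream, can neutralise whatever $j$ does under her true report (in particular prevent $j$ from seizing $y$ first) and still secure $y$. Making the routable characterisation precise and stable under these report changes, and checking that persistence of ownership is exactly what keeps the constructed route intact, is the technical heart; I would isolate it as a separate reachability lemma for hierarchical exchange and obtain the proposition as a corollary. A possibly cleaner alternative would be to argue the contrapositive — if some $y$ lies in $\mathcal{M}_{j|j}^\phi(R_{-j})$ but not in $\mathcal{M}_{i|i}^\phi(R_{-i})$, then $j$'s assignment is invariant to $i$'s report — using efficiency and strategy-proofness to show that an object $i$ can never reach cannot be a channel through which $i$ influences $j$.
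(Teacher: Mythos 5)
Your reading of the logical structure is right: the forward direction is Lemma~\ref{lemma:menu-inclusion-implies-power}, and all the content is in (ii) $\Rightarrow$ (i), where the broker-free, persistent-ownership structure must be used. But the core of your argument is deferred to an unproven ``reachability lemma,'' and the simulation strategy you sketch runs directly into the obstacle you yourself flag without resolving it: $\mathcal{M}_{j|j}^\phi(R_{-j})$ is computed under $i$'s true report while $\mathcal{M}_{i|i}^\phi(R_{-i})$ is computed under $j$'s true report, so the witness profile $(R_i, R_j^y)$ that puts $y$ in $j$'s menu and the target profile $(R_i^y, R_j)$ that should put $y$ in $i$'s menu differ in \emph{two} coordinates. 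Your plan to have $i$ ``reproduce the sequence of releases'' and ``grab $y$ herself'' never explains why the route that delivered $y$ to $j$ survives when $j$ reverts to her true report and $i$ simultaneously changes hers. That is a genuine gap, not a detail.

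The paper's proof avoids element-by-element simulation entirely, and the step it uses is the one missing from your outline. Since cycle-clearing order is irrelevant, first resolve every trading cycle among $N\setminus\{i,j\}$; the resulting submatching $\sigma$ and remaining object set $\overline{O}_\sigma$ depend only on $R_{N\setminus\{i,j\}}$, hence are common to both menu computations. The remaining pointing graph of the other agents then has no cycles and splits into exactly two trees, one rooted at $i$ and one rooted at $j$. If $j$'s most preferred remaining object were owned inside $j$'s own tree, $j$ would obtain it regardless of $i$'s report, contradicting (ii); so $j$ must point into $i$'s tree. But then, for \emph{every} report of $i$, the cycle that forms passes through $i$ and gives $i$ her most preferred remaining object, so $\mathcal{M}_{i|i}^\phi(R_{-i}) = \overline{O}_\sigma$ — the whole remaining set — while $\mathcal{M}_{j|j}^\phi(R_{-j}) \subseteq \overline{O}_\sigma$ trivially. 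In short, the hypothesis of power does not merely make $i$ ``upstream'' of $j$; it forces $i$'s menu to be maximal, which is what makes the inclusion immediate. If you want to salvage your approach, this two-tree decomposition is the precise form your reachability lemma needs to take.
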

\begin{proof}

    $(i)\implies(ii)$: see Lemma~\ref{lemma:menu-inclusion-implies-power} above.

    $(ii)\implies(i)$: Since $\phi$ is a hierarchical exchange rule, it can be represented as a TC-mechanism without brokers. Moreover,
    recall that the final matching under a TC-mechanism does not depend on the order in which trading cycles are resolved.\footnote{See footnote 20 in \cite{pycia2017incentive}.} 
    Hence, we can successively resolve all trading cycles that do not involve $i$ or $j$ (and which therefore do not depend on 
    $R_i$ or $R_{j}$ but only on $R_{N\backslash\{i,j\}}$) to eventually arrive at a submatching $\sigma$.
    Now, given $\sigma$ and the associated assignment of ownership,  consider the subgraph where all unmatched agents other than $i$ and $j$ point to the owner of their most preferred object. By construction of $\sigma$, there are no further cycles, so that the graph consist of two separate components, namely a  tree with root $i$ and a tree with root $j$ (possibly with $i$ or $j$ as isolated vertex, i.e., as a degenerate tree). As $\sigma$ before, this subgraph is independent of $R_i$ and $R_j$, but depends only on $R_{N\backslash\{i,j\}}$.  Now if $R_j$, is such that $j$'s most preferred object is owned by an agent in the component that forms a tree with root $j$ (possibly $j$ herself), then regardless of $\tilde{R}_i$, there will be a trading cycle involving $j$ where $j$ receives her most preferred among all remaining objects in $\overline{O}_\sigma$. Hence, $i$ would have no power over $j$. Instead we know that $R_j$ must be such that $j$'s most preferred object is owned by an agent on the component that forms a tree with root in $i$. Now, for any $\tilde{R}_i$, if we consider the owner(s) to which $i$ and $j$ point, there is a cycle that can be cleared and in which $i$ receives their most preferred among the remaining objects. Thus, $\mathcal{M}_{i|i}(R_{-i})=\overline{O}_\sigma$. Moreover, regardless of $R_i$, we know that $j$ can receive at most any object in $\overline{O}_\sigma$ by varying their own reported preferences. Thus, $\mathcal{M}_{i|i}^\phi(R_{-i}) \supseteq \mathcal{M}_{j|j}^\phi(R_{-j})$.
\end{proof}

Proposition \ref{prop:hierarchical-exchange-menu} yields the following corollary (which follows from the fact that set-inclusion is a transitive relation):

\begin{corollary}\label{cor:power.transitive}
    Consider a hierarchical exchange rule $\phi$ for a set $N$ of agents, a preference profile $R$, and three agents $i,j,k \in N$. If $i$ has power over $j$, and $j$ has power over $k$, then $i$ has power over $k$:
    \[ |\mathcal{M}_{i|j}^\phi(R_{-i})| > 1 \text{ and } |\mathcal{M}_{j|k}^\phi(R_{-j})| > 1 \implies |\mathcal{M}_{i|k}^\phi(R_{-i})| > 1.\]
\end{corollary}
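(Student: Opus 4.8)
The plan is to use Proposition~\ref{prop:hierarchical-exchange-menu} as a dictionary that translates each relational statement ``agent has power over another agent'' into an inclusion between own-menus, and then to exploit the transitivity of set inclusion. By definition the two hypotheses read $|\mathcal{M}_{i|j}^\phi(R_{-i})| > 1$ and $|\mathcal{M}_{j|k}^\phi(R_{-j})| > 1$. Since $\phi$ is a hierarchical exchange rule and the preference profile $R$ is fixed throughout, I would first apply Proposition~\ref{prop:hierarchical-exchange-menu} to the pair $(i,j)$ to rewrite the first hypothesis as $\mathcal{M}_{i|i}^\phi(R_{-i}) \supseteq \mathcal{M}_{j|j}^\phi(R_{-j})$, and then apply it to the pair $(j,k)$ to rewrite the second hypothesis as $\mathcal{M}_{j|j}^\phi(R_{-j}) \supseteq \mathcal{M}_{k|k}^\phi(R_{-k})$.

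Next, I would chain these two inclusions: because $\supseteq$ is a transitive relation on subsets of $O$, the two displayed inclusions combine to give
\[
\mathcal{M}_{i|i}^\phi(R_{-i}) \supseteq \mathcal{M}_{k|k}^\phi(R_{-k}).
\]
Finally, I would invoke Proposition~\ref{prop:hierarchical-exchange-menu} a third time, now for the pair $(i,k)$, in its converse direction: this inclusion of own-menus is equivalent to $|\mathcal{M}_{i|k}^\phi(R_{-i})| > 1$, which is precisely the assertion that $i$ has power over $k$. This completes the argument.

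There is no real obstacle here, since all the genuine work has already been done in establishing the equivalence of Proposition~\ref{prop:hierarchical-exchange-menu} (in particular the direction $(ii)\implies(i)$, which relies on the broker-free TC-representation of hierarchical exchange rules). The only point requiring attention is that the three applications of the proposition all refer to the \emph{same} fixed profile $R$ — this is automatic, because every menu appearing in the statement is evaluated at $R$ (with only the relevant agent's own report varying) — and that the middle step is nothing more than transitivity of inclusion, as the corollary's own phrasing anticipates.
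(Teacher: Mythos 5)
Your proposal is correct and is exactly the argument the paper intends: the paper states that the corollary ``follows from the fact that set-inclusion is a transitive relation'' given Proposition~\ref{prop:hierarchical-exchange-menu}, which is precisely your three applications of the equivalence plus the chaining of the two inclusions. Nothing further is needed.
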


The transitivity of power relations revealed in Corollary \ref{cor:power.transitive} is specific to hierarchical exchange rules but breaks down for general TC-mechanisms, that allow for brokers:

\begin{example}\label{ex.nontrans}
    Consider a TC-mechanism with at least three agents and objects. Let agent $1$ be the broker for $a$, agent $2$ the owner of $b$, and agent $3$ the owner of $c$. If preferences $R$ are such that $1$ considers $b$ most preferred, $2$ prefers $c$ the most and $3$ prefers $a$, then all three agents are assigned their most preferred object. If instead $1$ would report $c$ as most preferred, they would be assigned $c$ and $2$ would be left with the (less-preferred) object $b$. Thus, $1$ has power over $2$. Similarly, $2$ has power over $3$, as they could change the assignment of $3$ by instead reporting $a$ as most preferred (thus trade their object with $1$). However, $1$ has no power over $3$: regardless of whether they report $b$ or $c$ as most preferred, the resulting trading cycle leaves $3$ with object $a$ (recall that as a broker for $a$, agent $1$ will be forced to trade it, given $R_{-1}$).
\end{example}

Together, example~\ref{ex.nontrans} and corollary~\ref{cor:power.transitive} characterise hierarchical exchange rules as the only efficient and group-strategy-proof assignment rules where power over another forms a transitive relation. This is an alternative to \citet{Papai2000}'s original characterisation, which relies on group-strategy-proofness, efficiency, and reallocation-proofness. Our characterisation may offer a formal rationale for the name of this class of mechanisms, by identifying “hierarchy” with power transitivity.

\begin{corollary}\label{cor:hierarchical}
    Let $\phi: \mathcal{R}\rightarrow A$ be an efficient and group-strategy-proof assignment rule. Then $\phi$ is a hierarchical exchange rule if and only if for all agents $i,j,k\in~N$ and at all preference profiles $R\in\mathcal{R}$ where $i$ has power over $j$, and $j$ has power over $k$, we find that $i$ has power over $k$.
\end{corollary}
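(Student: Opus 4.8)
The plan is to prove the two implications separately, with the forward direction essentially free and the backward direction carrying all the work. For the forward direction, suppose $\phi$ is a hierarchical exchange rule. Then Corollary~\ref{cor:power.transitive} already asserts, for any triple $i,j,k$ and any profile $R$, that $i$ having power over $j$ together with $j$ having power over $k$ implies $i$ has power over $k$ — which is exactly the stated transitivity condition, so nothing remains to prove. For the backward direction I would argue by contraposition: assume $\phi$ is efficient and group-strategy-proof but \emph{not} a hierarchical exchange rule, and exhibit agents $i,j,k$ and a profile $R$ at which $i$ has power over $j$ and $j$ over $k$, yet $i$ has no power over $k$. By \citet{pycia2017incentive} the rule $\phi$ admits a TC-representation, and by \citet{Papai2000} hierarchical exchange rules are exactly the broker-free TC-mechanisms (the efficient, group-strategy-proof, reallocation-proof rules); hence, $\phi$ not being hierarchical, its representation must employ a broker at some submatching that is genuinely reached by the algorithm for some profile. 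Fix such a reachable submatching $\sigma$ at which an agent $i$ brokers an object $a$.

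The construction then generalises Example~\ref{ex.nontrans}. First I would fix the preferences $R_{N_\sigma}$ of the already-matched agents so that, independently of the reports of the still-unmatched agents, the algorithm first clears exactly the cycles producing $\sigma$; this relies on order-independence of cycle resolution and on top-richness of the domain (let each matched agent top-rank the object $\sigma$ assigns her). At $\sigma$, with $i$ brokering $a$, I would choose two further objects $b,c\in\overline{O}_\sigma$ owned by two \emph{distinct} agents $j\neq k$, and set $R_j$ to rank $c$ first and $R_k$ to rank $a$ first, mirroring the example. Varying $R_i$ over her favourite non-brokered object, when $i$ points to $b$ the cycle $i\to b\to j\to c\to k\to a\to i$ clears, so $j$ receives $c$ and $k$ receives $a$; when $i$ points to $c$ the shorter cycle $i\to c\to k\to a\to i$ clears and $j$ is left with a different object. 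This shows $i$ has power over $j$, and the symmetric manipulation by $j$ (reporting $a$ first to divert $k$'s assignment) shows $j$ has power over $k$.

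The crux is to show that for \emph{every} $R_i$ agent $k$ still receives $a$, so that $i$ has no power over $k$. The driving intuition is that the broker can never retain $a$ — she always points away from $a$ while $a$ points back to her — so $a$ is forced down the chain to the unique agent, $k$, who top-ranks it. The main obstacle is making this robust in the presence of unmatched agents other than $i,j,k$: a priori $i$ could point to some object $d\notin\{b,c\}$, escape the $\{i,j,k\}$-configuration, and, because brokerage (unlike ownership, which persists) need not be preserved once $i$ leaves, thereby alter who ultimately receives $a$ — which would give $i$ power over $k$ and destroy the example. To neutralise this I would choose the remaining agents' preferences so that every object other than $a,b,c$ is claimed by its own owner in a self-cycle already at $\sigma$, i.e. while $c_\sigma$ still makes $i$ the broker of $a$, leaving $i$ with no available option but $b$ or $c$ and so forcing the routing of $a$ to $k$ regardless of $R_i$. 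Verifying that this self-clearing can be arranged \emph{while $i$ is still brokering $a$}, rather than only at a later submatching with altered control rights, is the delicate point I expect to be hardest.

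Finally, the three-broker case permitted by condition~(iv) when $|\overline{N}_\sigma|=3$ must be handled separately, since then there are no owners among the three remaining agents and the outcome is given by the avoidance matching rather than by cycle clearing. For this I would either locate a single-broker submatching elsewhere along the algorithm, or check directly that the avoidance matching again produces a non-transitive triple; together with the main construction this yields a profile witnessing the failure of transitivity, completing the contrapositive and hence the characterisation.
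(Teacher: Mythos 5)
Your forward direction and the core of your backward construction match the paper's: the paper likewise reduces to Corollary~\ref{cor:power.transitive} one way and, the other way, builds a non-transitive triple by generalising Example~\ref{ex.nontrans}, neutralising the extra unmatched agents exactly as you propose (each additional owner top-ranks one of her own objects, and those objects are ranked last by $i$, $j$, $k$, so the self-cycles and the $i$--$j$--$k$ cycle clear in the same round of the algorithm while $i$ is still the broker). Your worry about that step is resolved by order-independence of cycle clearing, as you suspected.

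There is, however, a genuine gap. Your construction requires the broker $i$ to face two \emph{distinct} owners $j\neq k$ at the reachable submatching $\sigma$, and you never justify that such a submatching exists. From ``$\phi$ is not a hierarchical exchange rule'' you may only conclude that \emph{every} TC-representation of $\phi$ uses a broker somewhere; it does not follow that some representation has a broker facing two owners. The missing case is a broker who, at every submatching where she brokers, faces exactly one owner. There your triple cannot be formed (there is no second owner $k$), and the correct resolution is structural rather than constructive: a broker facing a single owner is forced to point to that owner, who therefore receives her most-preferred remaining object regardless of anyone's report; one can then rewrite the control rights at that submatching to make the owner own \emph{all} unassigned objects, obtaining an equivalent TC-mechanism, and iterating over minimal such submatchings eliminates all brokers. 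Hence in this case $\phi$ \emph{is} a hierarchical exchange rule after all, contradicting the hypothesis of the contrapositive. Without this argument your case analysis is incomplete: you would be trying to exhibit a non-transitive triple for a rule that may in fact be hierarchical and hence transitive. (Your separate flag on the three-broker avoidance-matching case is a reasonable additional concern, but the single-owner case is the one that actually breaks the argument as written.)
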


\begin{proof}
    In light of corollary \ref{cor:power.transitive} we only need to show the other direction. So suppose that $\phi$ is not a hierarchical exchange rule, thus cannot be represented by TC-mechanism without brokers. If it is represented by a TC-mechanism where at some submatching (that can be reached given some preference profile) we have a broker, say $i$, facing two owners, say $j$ and $k$, a construction analogous to example \ref{ex.nontrans} establishes that $i$ has no power over $k$. Similarly, if $i$ faces additional owners beyond $j$ and $k$ -- we may then consider a preference preference profile where the additonal owners consider one of their own objects to be most-preferred, while objects owned by them are least-preferred by $i$, $j$, and $k$.
    
    If $\phi$ can be represented by a TC-mechanism where at some submatching a broker faces exactly one owner (but never more), consider a minimal submatching $\sigma$ for which this is the case (i.e., such that at all submatchings $\sigma'\subsetneq \sigma$, there are no brokers). Then, the broker, say $i$, is forced to point to the owner, say  $j$, regardless of their own preferences so that the owner will be assigned their most-preferred object regardless of their preferences. If instead we change control rights at $\sigma$ and make $j$ the owner of all unassigned objects (including the object previously brokered by $i$), while keeping the control rights at all other submatchings unchanged, the new TC-mechanism is equivalent to the previous one. By iteratively removing brokers at any minimal submatching where they exist, we eventually arrive at an equivalent TC-mechanism without brokers -- contradicting the initial assumption that $\phi$ is not a hierarchical exchange rule.
\end{proof}


\subsection*{Top trading cycle}

As we mentioned above, all efficient and group-strategy-proof assignment rules will, at some profiles, allow only one of the agents to choose from a complete menu (and have maximal power over others) -- while another agent has no influence on their own (or any other) relevant outcome at all (Proposition~\ref{prop:assignment-extreme-situations} combined with Proposition~\ref{prop:non-autarkic-and-non-bossy}).

Such extreme inequity cannot be avoided, but it can take different forms. In the remainder of the paper, we contrast the patterns of power and freedom in two important subclasses of hierarchical exchange rules: Top Trading Cycles (TTC) on the one hand, and bipolar Serial Dictatorship (SD)—an extension of classical serial dictatorship—on the other.
TTC is characterised, among TC rules, by universal possibility of complete freedom: every agent can, at some opposing preference profile, enjoy a position of unrestricted freedom. In the same class of rules, bipolar SD, in contrast, is characterised  by minimal bilateral power: the size of an agent’s menu for another agent never exceeds two.
In other words, TTC grants each individual the potential for full freedom but allows situations in which one agent may exert full power over another. Bipolar SD, by contrast, ensures that no agent can determine another’s outcome beyond what is necessary, yet it fully predetermines the hierarchy of who enjoys more and who enjoys less freedom.

\begin{definition}[Top Trading Cycle]
For $n$ agents and objects, an assignment rule is a Top Trading Cycle rule (TTC) iff it corresponds to a TC-mechanism where each agent is the initial owner of one of the objects.\footnote{The algorithm was proven to be strategy-proof by \citet{Roth1982}.} We refer to the object that an agent $i\in N$ owns initially as their \textit{endowment} and denote it as $\omega_i\in O$. 

\end{definition}

Now, because TTC is efficient and strategy-proof, we know from Proposition~\ref{prop:assignment-extreme-situations} that it is inevitable that there is a preference profile where one agent enjoys no freedom, while another enjoys complete freedom. Our Theorem~\ref{thm:ttc-char} below shows that TTC rules uniquely stand out among efficient and group strategy proof assignment rules in that they allow every agent to be in the fortunate latter position for at least some preference profiles\footnote{Appendix D of \cite{basteck2026power} connects that characterisation with the germane result of \citet{LongVelez2021} on balancedness.}.

But given that TTC is non-autarkic and non-bossy, we also know from Proposition~\ref{prop:non-autarkic-and-non-bossy} that the inequality of freedom just mentioned must also be reflected into relations of power over others. Actually, and perhaps surprisingly, our characterisation theorem also shows that the bilateral power of $i$ over $j$ can be maximally large under TTC: given two agents $i$ and $j$, there is a preference profile such that $i$ has full power over $j$.

\begin{theorem}\label{thm:ttc-char}
    Consider $\phi$ an efficient and group strategy-proof assignment rule with at least four objects (and agents). Then the following three claims are equivalent:
    \begin{itemize}
        \item[(i)] $\phi$ is a TTC-rule
        \item[(ii)] for each agent $i$, there is a preference profile $R$ such that $\mathcal{M}_{i|i}^\phi(R_{-i}) = O$
        \item[(iii)] for any two agents $i$ and $j$, there is a preference profile $R_{-i}$ such that $\mathcal{M}_{i|j}^\phi(R_{-i}) = O$
    \end{itemize}
\end{theorem}

\begin{proof}
    \textit{(i) $\Rightarrow$ (ii)} If $\phi$ is a TTC rule, consider a preference profile $R$ where there is a single top trading cycle that includes all agents -- say such that agents for all $j<n$, agent $j$ considers the object owned by $k=j+1$ most-preferred while agent $j=n$ considers $1$'s object best. Then for $i\in N$, every object is in her menu: by pointing to any agent $j$ (and their owned), $i$ creates a cycle including herself where she is assigned the object she pointed to.
    
    \textit{(ii) $\Rightarrow$ (i)} Suppose $\phi$ is not a TTC mechanism. Then, by \cite{pycia2017incentive}, $\phi$ must be a TC mechanism where initially some agent $i$ does not own any object.
    
    If $i$ controls no object (neither as owner nor as broker), then for any possible opposing preference profile $R_{-i}$ there will be a cycle among owner(s) and possibly a broker in which objects are removed in the first step -- none of these objects are then in the menu of $i$, $\mathcal{M}_{i|i}(R_{-i})$.
    
    If instead $i$ is the broker of some object $x$, i.e., if $c_\emptyset(x)=(i,b)$ consider the subgraph of owners pointing to their most preferred object (and of objects pointing to their owner). If there is a cycle among owners and their objects, then the involved objects are again removed in the first step of the algorithm and none of the objects are in $\mathcal{M}_{i|i}(R_{-i})$. If on the other hand, there is no cycle among owners and their objects, the subgraph involving owners and their controlled objects consists only of chains that originate at some owner and that end at in an edge where an owner points to $i$. Regardless of the object that $i$ points to, there will then be a cycle involving $i$ which implies that $i$ trades away there brokered object, i.e., $x\notin \mathcal{M}_{i|i}(R_{-i})$.
    Hence, we have $\mathcal{M}_{i|i}(R_{-i})\neq O$ for all $R_{-i}$.

    \textit{(i) $\Rightarrow$ (iii)}     If $\phi$ is a TTC rule, without loss of generality, we examine the power of 1 over $n$. Let us first prove it when $n$ is odd. Consider a preference profile where $R_n: \omega_1 \omega_2 ... \omega_n$ and the other agents (2, 3, ..., $n-1$) are organised by pairs, each having $\omega_1$ as most preferred and the endowment of the other member of the pair as second most preferred, i.e. $R_2 : \omega_1 \omega_3 \omega_n ...$, $R_3 : \omega_1 \omega_2 \omega_n ...$, etc. Then when 1 prefers her own endowment $R_1$, agents $2, 3, ..., n-1$ will trade two by two and agent $n$ will keep $\omega_n$. When 1 prefers $\omega_i$, with $i \in \{ 2, ..., n-1\}$, $1$ and $i$ will trade together. All the other pairs trade within themselves. The agent $j$ who was paired with $i$ trades with $n$, who gets object $\omega_j$. When 1 prefers $\omega_n$, 1 and $n$ trade and $n$ gets $\omega_1$. So depending on 1's preferences, $n$ can be assigned any object.

    When $n=3$, the property is verified with $R_2: \omega_1 \omega_3 \omega_2$ and $R_3: \omega_1 \omega_2 \omega_3$. When $n=5$, consider the following preference profile: $R_2: \omega_1 \omega_4 \omega_5...$, $R_3: \omega_2 \omega_5 ...$, $R_4: \omega_1 \omega_3 \omega_5 ...$ and $R_5: \omega_1 \omega_2 ... \omega_5$. When 1 prefers $\omega_1$, agents 2, 3 and 4 will form a cycle and 5 will keep $\omega_5$. When 1 prefers $\omega_2$, 3 and 5 will trade and 5 will get $\omega_3$. When 1 prefers $\omega_3$, agents 1, 3 and 2 will form a cycle, then 5 will trade with 4 and get $\omega_4$. When 1 prefers $\omega_4$, she obtains it by trading with 4, then 5 trades with 2 and obtains $\omega_2$. When 1 prefers $\omega_5$, 5 gets $\omega_1$ by trading with her. When $n$ is odd and higher than $5$, the result obtains by combining the constructions from the even-$n$ case and the $n=5$ case.

    \textit{(iii) $\Rightarrow$ (ii)} Suppose that for each pair of agents $i$ and $j$, there is $R_{-i}$ such that $\mathcal{M}_{i|j}^\phi(R_{-i}) = O$. Then $|\mathcal{M}_{i|-i}(R_{-i})| \geq |\mathcal{M}_{i|j}(R_{-i})| = n$. But as an efficient and group strategy proof assignment rule, $\phi$ is non-autarkic and non-bossy, so by Proposition~\ref{prop:non-autarkic-and-non-bossy}, we know that $|\mathcal{M}_{i|i}(R_{-i})| = |\mathcal{M}_{i|-i}(R_{-i})| = n$. So for each $i$, there is a profile $R_{-i}$ such that $\mathcal{M}_{i|i}(R_{-i}) = O$.
\end{proof}

Conceptually, these results display a paradox. The equivalence of (i) and (ii) justifies the widespread intuition that TTC is a symmetric mechanism, since it awards everyone the opportunity for full freedom, for some combination of others' preferences. But our analysis of transitive power hierarchies in hierarchical exchange mechanisms (proposition~\ref{prop:hierarchical-exchange-menu} and its corollaries) also apply to TTC, showing that such a seemingly horizontal mechanism is in no way immune to freedom inequality and unbalanced power relations. Indeed, the equivalence of (i) and (iii) shows that TTC does generate maximally imbalanced bilateral power relations.

That paradox has an additional interest given the Walrasian market interpretation of TTC: the power and freedom privilege of those who are assigned earlier in the algorithmic interpretation of the mechanism can be equivalently understood, in the market interpretation, as a privilege of the richest agents -- those whose endowment has the highest market value\footnote{Cf. Appendix D of the long working paper version.}. This suggests the existence of a form of power in markets that is based on property, not monopoly, and that persists in competitive markets where the price is set impersonally as a pure market-clearing device.

\subsection*{(Bi-polar) Serial Dictatorship}\label{subsec.sd}
Serially dictatorial assignment rules are arguably the simplest efficient and (group-~)strategy-proof rules. An assignment rule $\phi$ is a Serial Dictatorship if there exists an ordering of agents so that the first agent is assigned their most preferred object, while each subsequent agent in the order is assigned their most preferred among the remaining objects. Equivalently, we can define it as a hierarchical exchange rule corresponding to a TC-mechanism, where one agent is the initial owner of all objects and ownership of objects is inherited according to a pre-specified ordering of agents. 

\begin{definition}[Serial dictatorship as a hierarchical exchange rule]
An assignment rule $\phi$ is a Serial Dictatorship iff it is equivalent to a hierarchical exchange rule for which there exists an ordering of agents --  $i_1,\dots i_n$ -- so that, for any proper submatching $\sigma$, the first of all umatched agents owns all unmatched objects: $\forall \sigma \in \mathcal{S}, x\in \overline{O}_\sigma, i_k\in \overline{N}_\sigma:$
\begin{equation*}
(\forall i_l\in \overline{N}_\sigma : k\leq l) \iff c_\sigma(x)=i_k
\end{equation*}
\end{definition}

\cite{BogomolnaiaDebEhlers2005} introduce Bi-polar Serial Dictatorships as a slightly larger class of assignment rules: a Bi-polar Serial Dictatorship divides initial ownership between at most two agents, allows them to pick their most preferred object among their endowments, or trade with each other. After that, ownership passes on according to a pre-specified ordering of agents as under a Serial Dictatorship. \cite{BogomolnaiaDebEhlers2005} characterise these rules on the domain of weak preferences by strategy-proofness,  Pareto-indifference (vacuously satisfied for strict preferences), non-bossiness, and essential single-valuedness.\footnote{For single valued rules, selections from Bi-polar serially dictatorial rules are characterised by strategy-proofness, Pareto-efficiency and weak non-bossiness.} 

\begin{definition}[Bi-polar Serial dictatorship]
An assignment rule $\phi$ is a Bi-polar Serial Dictatorship  iff it is equivalent to a hierarchical exchange rule for which there exists an ordering of agents --  $i_1,\dots i_n$ -- so that, for the initial, empty, submatching $\sigma=\emptyset$, agents $i_{1}$ and $i_2$ are the owners of all objects, i.e., $c_{\emptyset}(x)\in\{i_1,i_2\}$ for all $x\in O$. For any proper submatching $\sigma\neq \emptyset$, the first of all umatched agents owns all unmatched objects: $\forall \sigma \in \mathcal{S}\backslash \{\emptyset\}, x\in \overline{O}_\sigma, i_k\in \overline{N}_\sigma:$
\begin{equation*}
(\forall i_l\in \overline{N}_\sigma : k\leq l) \iff c_\sigma(x)=i_k\in \overline{N}\sigma
\end{equation*}
\end{definition}

It is straightforward to see that under a (Bi-polar) Serial Dictatorships, agents differ a lot in terms of the influence that they have on their own relevant outcome. However, perhaps surprisingly, and in stark contrast to what Theorem~\ref{thm:ttc-char} has shown for TTC, they uniformly \emph{minimise} the influence that any agent may have over any other individual agent -- by varying their own (reported) preferences, an agent can change the outcome of another agent to at most one other possible object.

\begin{lemma}
\label{lemma:SD-power}
Consider a serial dictatorship $\phi$, two agents $i_k$ and $i_l$ with $k<l$ and a preference profile $R\in \mathcal{R}$. Then $|\mathcal{M}_{i_l,i_k}^\phi(R_{-i_l})| = 1$ and $|\mathcal{M}_{i_k,i_l}^\phi(R_{-i_k})| = 2$.
\end{lemma}

\begin{proof}
Let $\phi$ be a serial dictatorship assigning agents in $N=\{1,...,n\}$ to objects in $O$;  w.l.o.g. let $\phi$ order agents in the natural order $1<2...<n$. Then for any two agents $i < j$ and any preference profile~$R$, we want to show that $|\mathcal{M}_{j|i}^\phi(R_{-j})| = 1$ and $|\mathcal{M}_{i|j}^\phi(R_{-i})| = 2$. The first part of the claim is obvious. Let us prove the second.

First, consider the influence of agent $1$ over $2$. Note that by reporting her preferences $R_1$, and in particular a most preferred-object $x$, she allows $2$ to choose from a subset $O\backslash\{x\}$. As only one object among objects in $O$ is missing, either $2$'s most-preferred or second-most-preferred object is still included, thus there are  two different objects that $2$ will be assigned, depending $1$'s choice. 

We next state this simple fact in more abstract form, allowing us to generalise it beyond agents $1$ and $2$: if there is a collection of subsets of some set $O$, i.e., $\mathcal{O} \subseteq\{O'|O'\subseteq O\}$, such that each $O'\in \mathcal{O}$ is missing at most one object from the union of all subsets, i.e., $|O'|=|\cup_{O''\in \mathcal{O}} O''|-1$, and that every object in the union of all subsets is missing from some subset, i.e., $\cap_{O''\in \mathcal{O}} O''=\emptyset$ then an agent with strict preferences over $\cup_{O''\in \mathcal{O}} O''$ who has to choose from some $O'\in \mathcal{O}$ will be able to pick her most- or second-most-preferred object among all objects in $\cup_{O''\in \mathcal{O}} O''$.

Crucially, as long as there are still $2$ objects to choose from, the resulting new collection of subsets of remaining objects still has the abstract structure above:

\textbf{Claim 1:} \emph{Consider $\mathcal{O}\subseteq\{O'|O'\subseteq O\}$ such that $|O'|=|\cup_{O''\in \mathcal{O}} O''|-1\geq 2$ for all $O'\in  \mathcal{O}$, and $\cap_{O''\in \mathcal{O}} O''=\emptyset$. Moreover, consider two objects $a,b\in \cup_{O''\in \mathcal{O}} O''$ and construct a new collection $\mathcal{O}'$ of subsets where for each subset that contains $a$, we eliminate $a$ and for each that does not contain $a$, we eliminate~$b$. Then $|O'|=|\cup_{O''\in \mathcal{O}'} O''|-1$ for all $O'\in  \mathcal{O}'$, and $\cap_{O''\in \mathcal{O}'} O''=\emptyset$.}

To verify the claim, first observe that the construction of $\mathcal{O}'$ is well defined: since all subsets in $\mathcal{O}$ lack one object compared to their union, either $a$ or $b$ are included in any such subset and we can always eliminate one of them. 

Next, each set in $\mathcal{O}'$ is then smaller by 1, compared to the set in $\mathcal{O}$ from which it was constructed. What remains to show is that $|\cup_{O''\in \mathcal{O}'} O''|$ is also smaller by one compared to $|\cup_{O''\in \mathcal{O}} O''|$. To see this, note that $a\in \cup_{O''\in \mathcal{O}} O''$ while $a$ is not included in any $O''\in \mathcal{O}'$. Moreover since $|\cup_{O''\in \mathcal{O}} O''|-1\geq 2$, there is a third object in the union, say $c$, and since $\cap_{O''\in \mathcal{O}} O''=\emptyset$ there is a set in $\mathcal{O}$ that did not contain $c$ and instead both $a$ and $b$. Thus, for the corresponding set in  $\mathcal{O}'$ constructed from it, we know that it contains $b$. But then $b\in \cup_{O''\in \mathcal{O}'} O''$. Last, observe that $\cap_{O''\in \mathcal{O}} O''=\emptyset$ implies $\cap_{O''\in \mathcal{O}'} O''=\emptyset$ as each set in $\mathcal{O}'$ is a subset of the set in $\mathcal{O}$ from which it was constructed. This proves the claim.

Now, returning to agent $1$, we saw that they can affect $2$'s assignment only between two objects because all the sets that $2$ can potentially choose from are close to each other (each lacking 1 element, relative to the union). But by claim 1, the same also applies to agent 3: the collection of subsets that $3$ could potentially choose from (depending on varying $R_1$, keeping others' preferences fixed, in particular $R_2$) 
are close in the same way, given that $2$ will eliminate either their most-preferred object ($a$) or their second-most-preferred one ($b$). Thus agent 3 will obtain either her most-preferred or her second-most-preferred object in the union of these subsets (note that they are not necessarily her two most-preferred objects in the larger set $O$). Continuing in this way, we see that $1$ can influence the choice of any subsequent agent $j$ between only two objects, as long as agent $j-1$ chose from sets of size at least two -- but this holds for all subsequent agents up until $j=n$.

Finally, the same holds for any $i>1$ in relation to subsequent agents $j>i$ -- simply consider the reduced problem where agents $1,...,i-1$ have been removed together with the objects they chose and relabel the remaining agents. \end{proof}


In fact, Bi-polar Serial Dictatorships are the only efficient and strategy-proof rules where no agent has a greater influence on the outcome of any other agent.
\begin{theorem}
\label{thm:bipolar-SD-char}
Consider an efficient and group-strategy-proof assignment rule $\phi$. The following statements are equivalent:
\begin{enumerate}
    \item  For all preference profiles $R\in \mathcal{R}$ and all $i,j\in N$, we  have $|\mathcal{M}_{i|j}^\phi(R_{-i})| \leq2$.
    \item $\phi$ is a Bi-polar Serial Dictatorship.
\end{enumerate}
\end{theorem}

The theorem is proven in Appendix~\ref{appendix:proof-bipolar-SD-char}. While the direction $(2) \Rightarrow (1)$ derives from lemma~\ref{lemma:SD-power}, the other direction proceeds as follows. We consider a TC-mechanism with an associated control right structure that implements $\phi$ and show that the constraint on bilateral power implies that there is no submatching where more than two agents can control objects (claim 1) and that we can assume, w.l.o.g. that neiter of them is a broker (claim 2). We then consider the case that there is one owner initially to show that at the next submatching there needs to be a single owner (claim 4) and that the preferences of the initial owner do not affect the order of inheritance among subsequent owners (claim 5). We then show the analogue for the remaining case of two initial owners.

\section{Conclusion}
In this paper, we have proposed to measure an agent's influence in a strategy-proof mechanism by their menu, the set of possible outcomes that they can bring about by varying their strategy, i.e., their (reported) preferences. Further, the approach allows to disaggregate influence into freedom -- an agent's influence on her own relevant outcome -- and power over others -- influence on their relevant outcome -- and encompasses as special cases the ranking of opportunity sets from the freedom of choice literature and power indices from the literature on binary voting.

As our first main result, we uncover a subtle connection between welfare and freedom: while the two notions do not coincide (an agent may find their preferences satisfied because of “luck” while lacking any influence), we show that, under a mild richness condition on preferences, a mechanism Pareto-dominates another mechanism if and \emph{only if} it awards each agent with more freedom, i.e., larger menus in a set inclusion sense. Hence, the set of constrained efficient mechanisms coincides with the set of mechanisms that offer a maximal degree of freedom. 

Second, we fruitfully apply our approach to assignment rules. We describe power and freedom in group-strategy-proof and efficient rules and, within that class, uncover a stark trade-off between two (arguably normatively appealing) properties: first, universal possibility of complete freedom, i.e., allowing everyone to choose among all objects at some opposing preference profile; second, minimal bilateral power, i.e., ensuring that no one can influence another individual's outcome beyond what is necessary. The top trading cycle possesses the first property (and indeed, is characterised by it among group-strategy-proof and efficient rules), but generates situations where an agent has \emph{maximal} bilateral power on another. Bipolar serial dictatorship mechanisms possess the second property (and indeed, are characterised by it in the same class of rules), but rigidly predetermine a freedom hierarchy. 

We hope that our framework can serve as a first step towards a unified game-theoretic analysis of power and freedom in economic environments, meeting the challenge from political philosophers. In particular, we would hope to see our approach extended beyond direct, dominant-strategy mechanisms, taking into account other players' equilibrium strategy adaptation to a player's change in strategy. Among other things, this would allow applications to many industrial organisation models, and thus the analysis of monopoly power.

\section*{Appendix}
\appendix

\section{Power and preferences}
\label{appendix:power-preferences}
In our framework, the players' preferences come into play at two distinct levels when defining the menu of agent $i$. On the one hand, we take the actual preferences of all other players as given; on the other hand, to test the effect of various counterfactual possibilities for $i$'s own preferences. So $i$'s power depends on the other players' preference profile, and on $i$'s preference domain. It is worthwhile to review in this light some existing debates in game theory on the relation between preferences, power and freedom.

\paragraph{Actual versus \textit{a priori} voting power} \citet{FelsenthalMachover1998, FelsenthalMachover2004} proposed two influential distinctions. The first is between I-power (power as influence, or ‘‘a voter’s potential influence over the outcome" \citep[p. 9]{FelsenthalMachover2004}) and P-power (power as payoff, or ‘‘a voter's expected relative share in a fixed prize"). The framework proposed here, which generalises the notion of decisiveness, belongs to the first category: it measures the influence of an agent over the outcome, not her welfare.

The second distinction proposed by Felsenthal and Machover contrasts  \textit{a priori} (‘‘the component of the [players]’ voting power that derives solely from the decision rule itself", ibid., p.13) from \textit{actual} voting power (which ‘‘depends on a complex interaction of real-world factors", esp. ‘‘diplomacy, political pressures, members’ specific interests and preferences", ibid.). They argue that the complexity of factors affecting actual voting power makes its analysis impracticable, especially for normative purpose, and hence they choose to focus on \textit{a priori} power, i.e. power indices derived from the voting procedure alone, independently of strategic interactions and of players' preferences.

They go even further, arguing that when trying to take all actual forces into account, ‘‘we arguably move away from considering power altogether: we move from ‘who can get what’ to ‘who does get what’" (ibid.:14), i.e. from I to P-power. We believe that our framework is able to meet precisely this challenge. It allows for actual power measurement, taking into account how other players' preferences and strategic interactions shape an agent's opportunities, while also preserving a dimension of potentiality or causality. As argued above, the key step to analyse an agent's power in that way is to take only the other players' preferences as given, while varying her own preferences. This variation preserves a counterfactual element in the analysis of actual power, hence preventing a confusion between ‘‘who can get what" and ‘‘who gets what".

Note, furthermore, that the interest in actual power does not mean the abandonment of \textit{a priori} power. Instead, what Felsenthal and Machover call \textit{a priori} power can be obtained from the expected cardinality of the menu's size, under some assumption for the probability distribution of the preferences of the other agents. In particular, in binary voting games, an agent's menu is either of size 2 or 1, depending on whether she is decisive or not; the classic \textit{a priori} power indices then obtain as expectations of that menu's cardinality under natural assumptions for the distribution of preferences for the other players.

\paragraph{Game versus game-form} A similar rivalry between \textit{a priori} and actual power analysis has expressed itself forcefully in the debate between \citet{NapelWidgren2004, BrahamHoller2005, NapelWidgren2005, BrahamHoller2005b}. In the introduction, we mentioned Napel and Widgren's framework, which is one of the inspirations for this paper. Braham and Holler objected to it. Their argument is that the meaning of power is an ‘‘ability to affect outcomes", not the effect itself. Therefore, according to them, the measurement of power can not refer to the preferences of the agents themselves. Power is a property of a game form, the rules of the game, not the game itself, which includes the preferences of the agents. (See \citet{Bervoets2007} for a similar argument regarding the measurement of freedom.)

\citet{BrahamHoller2005}'s main argument is semantic: they claim that power is a causal or potential category and hence should be deduced from the analysis of the game form, and not be conflated with a claim about actual outcomes that are derived from the preferences of the agents. But in the framework presented here, the power of an agent depends on the preferences of others, not her own. By varying her own preferences in her preference domain, the principle of causality or potentiality is preserved; while at the same time, by taking the preferences of others as given, one makes sure that real as opposed to formal power is analysed, and takes into account the effective forces that constrain an agent.

\section{Proof and corollaries of proposition~\ref{prop:delta-rank}}
\label{appendix:delta-rank}
To prove the proposition, we start with the following lemma.
\begin{lemma}
\label{lemma:rho-conditional-delta}
Consider an environment $(A,N, \mathcal{R})$, an agent  $i\in N$ such that $|A_i| = m$ and $\mathcal{R}_i$ is the set of strict orderings of $A_i$, and a strategy-proof mechanism $\phi$. Suppose that the preference profile $R$ is drawn at random so that $R_i$ is uniform in $\mathcal{R}_i$ and independent from $R_{-i}$. Then for all $(r,s) \in \{1,..., m-1\}^2$: 
  \[
        \mathbb{P}[\rho_i^\phi(R) > r | |\mathcal{M}_{i|i}^\phi(R_{-i})| = s] = 
        \begin{cases}            
       \frac{(m-r)!(m-s)!}{m!(m-r-s)!} & \text{if } r +s \leq m \\
        0 & \text{otherwise}
         \end{cases} 
  \]
\end{lemma}
\begin{proof}
Write $\mathcal{M}_{i}$, $|\mathcal{M}_{i}|$ and $\rho_i$ for $\mathcal{M}_{i|i}^\phi(R_{-i})$, $|\mathcal{M}_{i|i}^\phi(R_{-i})|$ and $\rho_i^\phi(R_i,R_{-i})$. $R_i$ is a strict ordering;  let $R_{i,1}$ denote its top-ranked element, $R_{i,2}$ the element it ranks second, etc.

There are $|\mathcal{M}_{i}|$ elements of $A_i$ available to $i$. By strategy-proofness, all elements of rank strictly less than $\rho_i$ in $i$'s preference list are not available to $i$, so $\rho_i -1 + |\mathcal{M}_{i}| \leq m$. So when $r+s > m$, $\mathbb{P}[\rho_i > r | |\mathcal{M}_{i}| = s] = 0$.

Now suppose $r+s \leq m$. $R_i$ is independent from $R_{-i}$, so independent from $|\mathcal{M}_{i}(R_{-i})|$, and the law of $R_i$  conditional on $\{ |\mathcal{M}_{i}| = s\}$ is the same as its unconditional law, i.e., the uniform distribution in $\mathcal{R}_i$. Under that distribution, $R_{i,1}$ is uniform in $A_i$; conditional on $R_{i,1}$, $R_{i,2}$  is uniform in $A_i \setminus \{R_{i,1} \}$; etc.

Then we can write:
\begin{align*}
    \mathbb{P}[\rho_i > r \; | \; |\mathcal{M}_{i}| = s] & = \mathbb{P}[R_{i,1} \notin \mathcal{M}_{i}, \; R_{i,2} \notin \mathcal{M}_{i}, ..., R_{i,r} \notin \mathcal{M}_{i} \; | \; |\mathcal{M}_{i}| = s]\\
    & = \mathbb{P}[R_{i,1} \notin \mathcal{M}_{i} \; | \; |\mathcal{M}_{i}| = s] \times \mathbb{P}[R_{i,2} \notin \mathcal{M}_{i} \; | \; R_{i,1} \notin \mathcal{M}_{i}, |\mathcal{M}_{i}| = s] \\
    & \; \; \; \; \; \; \times ...\times \mathbb{P}[R_{i,r} \notin \mathcal{M}_{i} | R_{i,1} \notin \mathcal{M}_{i}, ..., R_{i,r-1} \notin \mathcal{M}_{i}, |\mathcal{M}_{i}| = s] \\
    & = \frac{m-s}{m} \frac{m-s-1}{m-1} ... \frac{m-s-r+1}{m-r+1} \\
    & = \frac{(m-r)!(m-s)!}{m!(m-r-s)!}
\end{align*}
\end{proof}

This allows us to prove proposition~\ref{prop:delta-rank}.

    Indeed, the case where $R_{-i}$ is deterministically given is a special case of when it is random.
    Thus, by lemma~\ref{lemma:rho-conditional-delta} we have 
     \[
        \mathbb{P}[\rho_i^\phi(R) > r  ] = 
        \begin{cases}            
       \frac{m-s}{m} \frac{m-s-1}{m-1} ... \frac{m-s-r+1}{m-r+1} & \text{if } r +s \leq m \\
        0 & \text{otherwise}
         \end{cases} 
  \]
  with $s := |\mathcal{M}_{i|i}^\phi(R_{-i})|$ and $m :=|A_i|=m$.
    Now suppose $|\mathcal{M}_{i|i}^\phi(R_{-i})| \geq |\mathcal{M}_{j|j}^\psi(R'_{-j})|$ and $|A_i|=m=|A_j|$. Then because the right hand side is weakly decreasing in $s$, for all $r \geq 1$, $\mathbb{P}[\rho_i > r] \leq \mathbb{P}[\rho_j > r]$, i.e., $\rho_j$ first-order stochastically dominates $\rho_i$.

    Conversely, suppose that $\rho_j$ first-order stochastically dominates $\rho_i$.  Then $\mathbb{P}[\rho_i > 1] \leq \mathbb{P}[\rho_j > 1]$. But for $r=1$, lemma~\ref{lemma:rho-conditional-delta} implies $\mathbb{P}[\rho_i > 1] = 1 - \frac{|\mathcal{M}_{i}|}{m}$ and likewise for $j$. So $|\mathcal{M}_{i|i}^\phi(R_{-i})| \geq |\mathcal{M}_{j|j}^\psi(R'_{-j})|$. \qed

Can this result be generalised when others’ preferences are random too? Elements of an answer are provided by the following two corollaries, both under the assumption of $R_{i}$ (resp. $R'_{j})$ uniformly random and independent from $R_{-i}$ (resp. $R'_{-j}$) -- the “impartial culture” assumption often invoked in voting theory.

\begin{corollary}
\label{cor:delta-rank}
If $|\mathcal{M}_{i|i}^\phi(R_{-i})|$ first-order stochastically dominates $|\mathcal{M}_{j|j}^\psi(R'_{-j})|$, then $\rho_j^\psi(R')$ first-order stochastically dominates $\rho_i^\phi(R)$.
\end{corollary}
\begin{proof}
    For a given $r$,
$\mathbb{P}[\rho_i > r] = \mathbb{E}\left[ \mathbb{P}[\rho_i > r | |\mathcal{M}^\phi_{i|i}(R_{-i})|] \right]$.
    
    Now suppose that $|\mathcal{M}_{i|i}^\phi(R_{-i})|$ first-order stochastically dominates $|\mathcal{M}_{j|j}^\psi(R'_{-j})|$. We know from its explicit form given by lemma~\ref{lemma:rho-conditional-delta} above that $\mathbb{P}[\rho_i > r | |\mathcal{M}^\phi_{i|i}(R_{-i})|]$ is a weakly decreasing function of $|\mathcal{M}^\phi_{i|i}(R_{-i})|$ on $\mathbb{R}_+$. Therefore, $\mathbb{E}\left[ \mathbb{P}[\rho_i > r | |\mathcal{M}^\phi_{i|i}(R_{-i})|] \right] \leq \mathbb{E}\left[ \mathbb{P}[\rho_j > r | |\mathcal{M}^\psi_{j|j}(R_{-j})|] \right]$. This is true for any $r$, which yields the desired result.
\end{proof}

\begin{corollary}
\label{cor:probability-preferred-outcome}
    \begin{equation*}
        \mathbb{E}[|\mathcal{M}_{i|i}^\phi(R_{-i})|] = | A_i|\times \mathbb{P}[\rho_i^\phi(R) = 1]
    \end{equation*}
\end{corollary}
\begin{proof}
    Applying lemma~\ref{lemma:rho-conditional-delta} with $r=1$ gives
    $\mathbb{P}[\rho_i > 1 | |\mathcal{M}_{i}| = s] = \frac{m-s}{m}$. Averaging over $s$ gives the desired result.
\end{proof}

Finally, another technical corollary derives from lemma~\ref{lemma:rho-conditional-delta}.
\begin{lemma}
\label{lemma:rho-delta-linear-bijection}
Consider $(A, N, \mathcal{R})$ an environment, $\phi$ a mechanism, $i$ an agent such that $|A_i| = m$ and $\mathcal{R}_i$ the set of strict orderings of $A_i$. Suppose that $R$ is randomly drawn, so that $R_{i}$ is uniform in $\mathcal{R}_i$ and independent from $R_{-i}$. Then the following linear, bijective relation holds between the probability distribution of $\rho_i$ and that of $|\mathcal{M}_i|$:
\begin{equation*}
    (\mathbb{P}[\rho_i > r ])_{1 \leq r \leq m-1} = M(\mathbb{P}[|\mathcal{M}_i| = s])_{1\leq s \leq m-1}
\end{equation*}
where $M$ is the $(m-1)\times (m-1)$ matrix defined by $M_{r,s} = \frac{(m-r)!(m-s)!}{m!(m-r-s)!}$ when $r+s \leq m$, 0 otherwise.
\end{lemma}
\begin{proof}
    \begin{align*}
        \mathbb{P}[\rho_i > r] & = \sum_{s = 1}^m \mathbb{P}[|\mathcal{M}_i| = s] \mathbb{P}[\rho_i > r \; | \; |\mathcal{M}_i| = s] & \\
        & = \sum_{s=1}^{m-r} \mathbb{P}[|\mathcal{M}_i| = s] \frac{(m-r)!(m-s)!}{m!(m-r-s)!} & \text{by lemma~\ref{lemma:rho-conditional-delta}}
    \end{align*}
    That matrix is anti-triangular, so invertible, so the relation is bijective and $(\mathbb{P}[|\mathcal{M}_i| = s])_{1\leq s \leq m-1} = M^{-1} (\mathbb{P}[\rho_i > r ])_{1 \leq r \leq m-1}$.
\end{proof}

\section{Binary games}
\label{appendix:binary-voting}
In the following, $\phi$ is a mechanism with two possible outcomes $A = \{ 0, 1 \}$, and such that all $n$ agents share the same preference domain of two strict orderings, the preference of 0 over 1, denoted 0, and the opposite one, denoted 1.

Given $R_{-i}$, agent $i$ is said \textit{decisive} if the outcome varies with her report: $\phi(R_i = 0, R_{-i}) \neq \phi(R_i = 1, R_{-i})$. Then it is clear that given $R_{-i}$, agent $i$ is decisive if and only if she has two possible outcomes in her menu, i.e. $|\mathcal{M}^\phi_{i|i}(R_{-i})| = 2$; otherwise $|\mathcal{M}^\phi_{i|i}(R_{-i})| = 1$. 

Let us denote $\mathcal{B}_i^\phi := \{ R_{-i} \in \{0,1\}^{n-1} | i \text{ is decisive when others have preferences }R_{-i}\}$ the set of others' preferences that make $i$ decisive.

\begin{definition}[Banzhaf index \citep{Banzhaf1965}]
An agent's Banzhaf index $b^\phi_i$ is the proportion of preference profiles for the other players under which $i$ is decisive:
\begin{equation*}
    b^\phi_i = \frac{|\mathcal{B}_i^\phi |}{2^{n-1}}
\end{equation*}
\end{definition}

If $R_{-i} \in \{0,1\}^{n-1}$, let us denote $\sigma(R_{-i}) = \sum_{j \neq i} R_j$, the number of others who prefer 1 over 0. Then we can define the other famous power index.

\begin{definition}[Shapley-Shubik index \citep{ShapleyShubik1954}]
Under mechanism $\phi$, $i$'s Shapley-Shubik index $s_i^\phi$ is defined as:
\begin{equation*}
   s_i^\phi = \sum_{R_{-i} \in \mathcal{B}_i^\phi} \frac{\sigma(R_{-i})!(n-1-\sigma(R_{-i}))!}{n!}
\end{equation*}
\end{definition}

\begin{proposition}[Expected power \citep{Straffin1988}]
\label{prop:Straffin}
    If the $R_j$s ($j \neq i$) are i.i.d. Bernoulli variables of parameter $1/2$, then
    \begin{equation*}
        \mathbb{P}[i \text{ is decisive under $\phi$ when others have preferences } R_{-i}] = b_i^\phi
    \end{equation*}
    
    If $p$ is a random variable uniformly drawn in $[0,1]$ and, conditional on $p$, $R_{-i}$ are i.i.d. Bernoulli variables of parameter $p$, then
    \begin{equation*}
        \mathbb{P}[i \text{ is decisive under $\phi$ when others have preferences } R_{-i}] = s_i^\phi
    \end{equation*}
\end{proposition}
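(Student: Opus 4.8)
The plan is to reduce both identities to a single observation: whether $i$ is decisive depends only on the opposing profile $R_{-i}$, never on $R_i$. Hence, under either sampling scheme, the event $\{i \text{ is decisive}\}$ is the disjoint union of the atoms $\{R_{-i}=r\}$ over $r\in\mathcal{B}_i^\phi$, and the whole computation reduces to evaluating the probability mass that each scheme places on a given opposing profile and then summing over $\mathcal{B}_i^\phi$.

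For the first claim I would use that, under i.i.d. fair coins, every one of the $2^{n-1}$ profiles in $\{0,1\}^{n-1}$ carries mass $2^{-(n-1)}$. Summing over the decisive profiles gives $\mathbb{P}[i \text{ decisive}] = |\mathcal{B}_i^\phi|\cdot 2^{-(n-1)}$, which is by definition $b_i^\phi$. This direction is essentially immediate and requires no further work.

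For the second claim I would first condition on $p$. By independence, a fixed profile $r$ with $\sigma(r)=k$ ones (and $n-1-k$ zeros) has conditional probability $p^{k}(1-p)^{n-1-k}$. Integrating against the uniform prior on $p$, the key step is the Beta integral $\int_0^1 p^{k}(1-p)^{n-1-k}\,dp=\frac{k!\,(n-1-k)!}{n!}$. Thus the unconditional mass of $r$ depends only on $k=\sigma(r)$ and equals $\frac{\sigma(r)!\,(n-1-\sigma(r))!}{n!}$. Summing over $r\in\mathcal{B}_i^\phi$ yields $\mathbb{P}[i \text{ decisive}]=\sum_{r\in\mathcal{B}_i^\phi}\frac{\sigma(r)!\,(n-1-\sigma(r))!}{n!}$, which one then identifies with $s_i^\phi$.

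The only genuine obstacle is this last identification: one must check that the combinatorial weight $\frac{\sigma!\,(n-1-\sigma)!}{n!}$ produced by the Beta integral is exactly the Shapley--Shubik weight carried by a swing profile. This is the familiar count of orderings of $N$ in which the $\sigma$ other yes-voters precede $i$ and the remaining $n-1-\sigma$ agents follow. Care is needed here only in lining up the index convention — whether one records the number of other yes-voters $\sigma$ or the size $\sigma+1$ of the pivotal coalition that includes $i$ — and I would verify this bookkeeping on the symmetric three-voter majority game, where every player must have weight $1/3$, as a sanity check. Everything else is a direct summation.
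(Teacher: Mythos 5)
Your argument is correct and is essentially the standard proof of Straffin's result; the paper itself gives no proof of this proposition (it is imported from \citet{Straffin1988} with only the surrounding definitions supplied), so the only thing to check you against is those definitions. The first identity is, as you say, immediate. For the second, conditioning on $p$ and the Beta integral $\int_0^1 p^{k}(1-p)^{n-1-k}\,dp=\frac{k!\,(n-1-k)!}{n!}$ are exactly right, and the resulting weight $\frac{\sigma(r)!\,(n-1-\sigma(r))!}{n!}$ is the correct Shapley--Shubik weight of a swing profile with $\sigma(r)$ other yes-voters (the count of orderings of $N$ in which exactly those $\sigma(r)$ agents precede $i$).

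The one step you deferred --- ``lining up the index convention'' --- is precisely where the paper's statement, read literally, breaks. The appendix defines $\sigma(R_{-i})=\sum_{j\neq i}R_j$ (so $i$ is excluded from the count) but then sets $s_i^\phi=\sum_{R_{-i}\in\mathcal{B}_i^\phi}\frac{(\sigma(R_{-i})-1)!\,(n-\sigma(R_{-i}))!}{n!}$, which is the weight appropriate to the size of the pivotal coalition \emph{including} $i$, i.e., to $\sigma+1$ rather than $\sigma$. Your own proposed sanity check exposes this: in the symmetric three-voter majority game each player swings at exactly the two opposing profiles with $\sigma=1$, so your formula gives $2\cdot\frac{1!\,1!}{3!}=\frac{1}{3}$ as it must, while the paper's displayed formula gives $2\cdot\frac{0!\,2!}{3!}=\frac{2}{3}$ (and is undefined whenever a swing profile has $\sigma=0$). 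So your derivation proves the proposition for the standard Shapley--Shubik index, which is evidently what is intended; to make the last identification literal you must either correct the paper's definition to $\frac{\sigma!\,(n-1-\sigma)!}{n!}$ or reinterpret $\sigma$ there as counting $i$ among the yes-voters. Carry out that one bookkeeping check explicitly and the proof is complete.
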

But we also know that
\begin{equation*}
E[|\mathcal{M}_{i|i}^\phi(R_{-i})|] = \mathbb{P}[i \text{ is decisive under $\phi$ when others have preferences } R_{-i}] + 1.
\end{equation*}
So under the probability distributions for others' preferences suggested by Straffin, there is a direct relation between the expectation of an agent's cardinal freedom index and her Banzhaf, or her Shapley-Shubik index.

\section{Additional results on TTC}
\label{appendix:other-results-ttc}
When they introduced this rule, \citet{ShapleyScarf1974} already noticed that it is equivalent to a Walrasian market allocation.

\begin{definition}[Equilibrium prices]
    Consider a TTC-rule $\phi$ with initial endowments $\omega=(\omega_i)_{i\in N}$ and a preference profile $R$. Then a \textit{price vector supporting the allocation $\phi(R)$} is a vector $p \in {\mathbb{R}_+^0}^n$ such that for each $i\in N$, $\phi_i(R)$ is the $R_i$ most preferred object in $i$'s associated budget set $B_i(p):=\{x\in O|p_x\leq p_{\omega_i}\}$. Let $\mathcal{P}^\phi(R)$ denote the set of all price vectors supporting $\phi(R)$.
\end{definition}

We may connect a price vector $p$ supporting $\phi(R)$ and the TTC-algorithm as follows: to allow agents within a cycle to trade, assign the same price to all objects within a cycle; to prevent agents within cycles that are formed later to demand objects that are in cycles formed earlier, assign to the former a lower price than to the latter. Thus, if there is a sequence of cycles $C_1, ..., C_m$ consistent with the TTC-algorithm\footnote{Recall that in general, there are multiple ordered sequence of cycles compatible with the TTC-algorithm, because of the indeterminacy that arises when multiple cycles arise simultaneously.} we may choose $p$ such that
    \begin{equation*}
        \forall i \in C_a, j \in C_b, \; p_{\omega_i} > p_{\omega_j} \Leftrightarrow a < b.
    \end{equation*}
Conversely, if $p_{\omega_i}\geq p_{\omega_j}$ for some supporting price vector, there is an order in which cycles can be cleared such that $a\leq b$ when $C_a$ contains $i$ and $C_b$ contains $j$.

Because of this remarkable connection, before providing our characterisation of the TTC rule, it may be interesting to interpret our previous results on power and freedom in hierarchical exchange in terms of wealth, i.e. the equilibrium price of an agent's endowment.

\begin{observation}
\label{obs:freedom-prices}
If $\phi$ is a TTC assignment rule and $i$, $j$ are two agents, then:
    \begin{equation*}
        \mathcal{M}^\phi_{i|i}(R_{-i}) \subseteq \mathcal{M}_{j|j}^\phi(R_{-j})\Leftrightarrow \forall p \in \mathcal{P}^\phi(R), p_{\omega_i} \leq p_{\omega_j}
    \end{equation*}
\end{observation}

\begin{proof}
\begin{align*}
    \mathcal{M}^\phi_{i|i}(R_{-i}) \subseteq \mathcal{M}_{j|j}^\phi(R_{-j}) & \Leftrightarrow \text{under $R_{-j}$, $j$ has power over $i$} & \text{by proposition~\ref{prop:hierarchical-exchange-menu}} \\
    & \Leftrightarrow \text{under $R$, for any order of the removal} & \\
    & \hspace{1cm} \text{of cycles, $i$ is matched weakly after $j$} & \\
    & \Leftrightarrow \forall p \in \mathcal{P}^\phi(R), p_{\omega_i} \leq p_{\omega_j} & \text{ as stated above}
\end{align*}
\end{proof}
Thus, an agent $j$ has more freedom to choose than another agent $i$ if, at the given preference profile, \emph{all} price vectors supporting the TTC-allocation ascribe a higher wealth to $j$ than to $i$.

\begin{corollary}
\label{cor:TTC-Menus-vs-Budgetsets}
For any preference profile $R$,
    \begin{equation*}
        \mathcal{M}^\phi_{i|i}(R_{-i}) = \cap_{p \in \mathcal{P}^\phi(R)}B_i(p).
    \end{equation*}
\end{corollary}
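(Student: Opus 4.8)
The plan is to rewrite the intersection of budget sets as a menu-inclusion condition between two agents, and then read both inclusions off Proposition~\ref{prop:freedom-prices}. Fix the profile $R$ and, for each object $x\in O$, let $j(x)\in N$ be the agent who receives it, i.e.\ $\phi_{j(x)}(R)=x$. The first fact I would record is that every supporting price vector is constant along trading cycles: if $k_1\to\dots\to k_r\to k_1$ is a cycle of $\phi(R)$, so that $\phi_{k_s}(R)=\omega_{k_{s+1}}$, then affordability $p_{\omega_{k_{s+1}}}\le p_{\omega_{k_s}}$ around the loop forces all these prices to be equal. In particular $p_x=p_{\omega_{j(x)}}$ for every $p\in\mathcal{P}^\phi(R)$, so that
\[
\bigcap_{p\in\mathcal{P}^\phi(R)}B_i(p)=\{x\in O:\ p_{\omega_{j(x)}}\le p_{\omega_i}\ \text{for all }p\in\mathcal{P}^\phi(R)\}.
\]
By Proposition~\ref{prop:freedom-prices}, applied to the pair $(j(x),i)$, the condition on the right is exactly $\mathcal{M}^\phi_{j(x)|j(x)}(R_{-j(x)})\subseteq\mathcal{M}^\phi_{i|i}(R_{-i})$. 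Hence the corollary reduces to showing, for every $x$, the equivalence
\[
x\in\mathcal{M}^\phi_{i|i}(R_{-i})\iff \mathcal{M}^\phi_{j(x)|j(x)}(R_{-j(x)})\subseteq\mathcal{M}^\phi_{i|i}(R_{-i}).
\]

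The ``$\Leftarrow$'' direction is immediate: since $\phi_{j(x)}(R)=x$, the object $x$ always lies in $j(x)$'s own menu, so the assumed inclusion places $x$ in $i$'s menu. This already delivers the ``$\supseteq$'' inclusion of the corollary.

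For the converse ``$\Rightarrow$'' --- which I expect to be the main obstacle --- I would argue the contrapositive at the level of a single supporting price vector: it suffices to show that if some $p\in\mathcal{P}^\phi(R)$ satisfies $p_x>p_{\omega_i}$, then $x\notin\mathcal{M}^\phi_{i|i}(R_{-i})$. Such a $p$ comes from a clearing order of the cycles of $\phi(R)$ in which $x$'s cycle $C_x$ is resolved strictly before $i$'s cycle. The heart of the argument is a persistence/independence lemma: running TTC on $(R_i',R_{-i})$ for an arbitrary report $R_i'$, every cycle resolved before $C_x$ under the order for $p$ is a closed loop among agents other than $i$, whose formation depends only on $R_{-i}$ and on the set of already-removed objects (by persistence of ownership), not on $R_i'$. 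Inductively, these cycles still form and remove the same objects, so $C_x$ forms and assigns $x$ to $j(x)\neq i$ exactly as before. The only way $i$ could interfere would be to be drawn into one of these early cycles, but that requires some agent to close a loop through $i$ by pointing to $\omega_i$, and no agent inside a self-contained cycle not involving $i$ ever points outside that cycle. Hence $i$ cannot obtain $x$ under any report, giving $x\notin\mathcal{M}^\phi_{i|i}(R_{-i})$ and completing the ``$\subseteq$'' inclusion.
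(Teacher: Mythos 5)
Your proof is correct, but it routes around the paper's argument in two places. First, the identification of objects with agents: the paper associates each object $x$ with its \emph{initial owner} $j$ (so $x=\omega_j$ and $p_x=p_{\omega_j}$ is a tautology, since the budget set is defined directly through endowment prices), whereas you associate $x$ with its \emph{recipient} $j(x)$ under $\phi(R)$. This forces you to prove that supporting prices are constant along trading cycles so that $p_x=p_{\omega_{j(x)}}$ --- a correct and easy observation (it is implicit in the paper's construction of supporting prices just before Proposition~\ref{prop:freedom-prices}), but an extra step the paper's bookkeeping avoids. Second, for the inclusion $\mathcal{M}^\phi_{i|i}(R_{-i})\subseteq\cap_p B_i(p)$ you reprove everything from the TTC algorithm via a persistence/order-independence induction on the clearing sequence. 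That argument is sound (the strict price gap puts $j(x)$ and $i$ in different cycles, so some valid clearing order resolves $C_x$ strictly before $i$'s cycle, and those early cycles are self-contained among agents other than $i$), but it duplicates machinery the paper has already packaged: since $x=\phi_{j(x)}(R)$ lies in $j(x)$'s menu, $x\in\mathcal{M}^\phi_{i|i}(R_{-i})$ immediately gives $i$ power over $j(x)$ (exactly as in Lemma~\ref{lemma:menu-inclusion-implies-power}), whence Proposition~\ref{prop:hierarchical-exchange-menu} yields $\mathcal{M}^\phi_{j(x)|j(x)}(R_{-j(x)})\subseteq\mathcal{M}^\phi_{i|i}(R_{-i})$ and Proposition~\ref{prop:freedom-prices} converts that into the price inequality. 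What your route buys is self-containedness and an explicit algorithmic picture of \emph{why} $i$ cannot reach objects priced above her endowment; what the paper's route buys is brevity and a cleaner display of the corollary as a formal consequence of the power--price correspondence already established.
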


\begin{proof}
    Consider object $x \in \mathcal{M}^\phi_{i|i}(R_{-i})$ and denote $j$ the agent endowed with $x$, i.e. $x = \omega_j$. Then $i$ has power over $j$, and $j$ is always matched after $i$. So, considering $p$ a price vector that supports the allocation $\phi(R)$, we have $p_{\omega_j} \leq p_{\omega_i}$. So $\omega_j \in B_i(p)$. So $\mathcal{M}_{i|i}(R_{-i}) \subseteq \cap_{p \in \mathcal{P}^\phi(R)} B_i(p)$.

    Conversely, consider $\omega_j \in \cap_{p \in \mathcal{P}^\phi(R)} B_i(p)$, i.e. $\forall p \in \mathcal{P}^\phi(R), p_{\omega_j} \leq p_{\omega_i}$. Then $j$ is always matched after $i$, so $i$ has power over $j$, so $\mathcal{M}_{i|i}^\phi(R_{-i}) \supseteq \mathcal{M}_{j|j}^\phi(R_{-i})$, so $\omega_j \in \mathcal{M}_{i|i}^\phi(R_{-i})$.
\end{proof}

Corollary \ref{cor:TTC-Menus-vs-Budgetsets} connects two related concepts: the notion of menus and the notion of budgets sets from which agents may choose in a preference maximising manner to arrive at the allocation $\phi(R)$. When choosing from budget sets, we implicitly assume that agents take prices as given and hence neglect the effect of changing their demand -- akin to a change in (reported) preferences -- on the price. In contrast, an agents' menu depends, by construction, only on others' reported preferences and is thus robust to changes of own reported preferences. To identify the analogously robust elements of agents' budget sets, we have to take the intersection of all possible budget sets that may support the allocation $\phi(R)$.

We conclude this Appendix by noting that our characterisation of TTC (Theorem~\ref{thm:ttc-char}) resembles  Theorem 1 in \cite{LongVelez2021} who require assignment rules to be \textit{balanced}, i.e., to yield each agent, for any $k\leq n$, equally often their $k$-most preferred object when considering all possible preference profiles. For $4$ agents and more, TTC-rules are then the only efficient, balanced, and group-strategy-proof assignment rules.\footnote{For $3$ agents the $3$-broker mechanism jointly satisfies these properties as well; Theorem \ref{thm:ttc-char} could be extended to 3 agents analogously.} While balancedness requires that, on average across the space of all preference profiles, agents are treated symmetrically, our requirement does not impose symmetric treatment -- in particular, while we ask that each agent should be in the fortunate position of having unrestricted freedom of choice (so that they receive their most preferred object whatever that may be) at least once,\footnote{Recall that by Proposition \ref{prop:assignment-extreme-situations} there is at least one agent who is in that situation at least once.} we do not impose that different agents should be equally often in this situation. Further, we do not place any restrictions on how often each agent should receive their $k$-most preferred object for $k\geq 2$. Indeed, \citet{LongVelez2021}'s result can be obtained as a corollary of our theorem.

\begin{corollary}
\label{cor:ttc-balanced}
    If an efficient and group-strategy-proof assignment rule $\phi$ with four or more agents is balanced, then it is a TTC rule.
\end{corollary}
\begin{proof}
    Consider $\phi$ an efficient, group-strategy-proof assignment rule with $n$ agents and objects that is balanced. Because $\phi$ is efficient and strategy-proof, by proposition~\ref{prop:assignment-extreme-situations} there is an agent $i$ and an opposing profile $R^0_{-i}$ where $|\mathcal{M}_{i|i}^\psi(R^0_{-i})| = n$.

    Now suppose that a preference profile is drawn following the impartial culture distribution. A profile $R$ such that $R_{-i} = R^0_{-i}$ has a positive probability, so $\mathbb{P}[|\mathcal{M}_{i|i}^\phi(R_{-i})| = n] > 0$. Consider another agent $j$. Because $\phi$ is balanced, $\rho_j^\phi(R)$ has the same distribution as $\rho_i^\phi(R)$. But by lemma~\ref{lemma:rho-delta-linear-bijection} (Appendix~\ref{appendix:delta-rank}), this implies that $|\mathcal{M}_{j|j}^\phi(R_{-j})|$ also has the same distribution as $|\mathcal{M}_{i|i}^\phi(R_{-i})|$. In particular $\mathbb{P}[|\mathcal{M}_{j|j}^\phi(R_{-j})| = n] = \mathbb{P}[|\mathcal{M}_{i|i}^\phi(R_{-i})| = n] > 0$.

    So there exists $R^1$ such that $\mathcal{M}_{j|j}^\phi(R^1_{-j}) = O$. This is true for any $j$, and we have assumed that $\phi$ is efficient and group strategy-proof, so we can apply Theorem~\ref{thm:ttc-char} and $\phi$ is a TTC rule.
\end{proof}

\section{Proof of Theorem~\ref{thm:bipolar-SD-char}}
\label{appendix:proof-bipolar-SD-char}

    $(2) \implies (1)$: Consider a Bi-polar serial dictatorship $\phi$ with an ordering of agents from $i_1$ to $i_n$.  

    Consider first the power of $i_1$ over $i_2$ at some preference profile $R$. Let us denote $i_2$'s most- and second-most-preferred objects by $a$ and $b$ respectively. Then if $i_2$ initially owns $a$, $\mathcal{M}_{i_1|i_2}^\phi(R_{-i_1}) = \{ a \}$, while if $a$ is initially owned by $i_1$ we have $\mathcal{M}_{i_1|i_2}^\phi(R_{-i_1}) = \{ a, b \}$. Similarly, $i_2$'s power over $i_1$ is bounded $|\mathcal{M}_{i_2|i_1}^\phi(R_{-i_2})|\leq 2$.
    
    Next, consider the power of $i_1$ over $i_{l}$, with $l > 2$. Suppose that $i_1$ initially owns $a$, the most-preferred object of $i_2$. Then $i_1$ can pick any object by reporting it as her most-preferred, while $i_2$ receives her most-preferred among the remaining objects. Hence, whatever $i_1$'s preferences, under opposing preferences $R_{-{i_1}}$ the resulting allocation will coincide with that of a standard, unipolar serial dictatorship. Thus, by lemma~\ref{lemma:SD-power}, $|\mathcal{M}_{i_1|i_l}^\phi(R_{-i_1})|\leq 2$.
    If instead, $a$ is owned by $i_2$ herself, $i_2$ receives $a$ while $i_1$ receives her most-preferred among the remaining objects. Hence, $i_1$ is in the same position as  the second ranked agent in a (unipolar) serial dictatorship -- again, by lemma~\ref{lemma:SD-power}, her power over subsequent agents is bounded by $|\mathcal{M}_{i_1|i_l}^\phi(R_{-i_1})|\leq 2$. Analogously, we find that $i_2$'s power over any $i_l$, $l>2$, is bounded in the same way.

    Last, consider the power of agent $i_k$ over $i_l$ for $k>2$. If $l<k$, the assignment of $i_l$ does not depend on $R_{i_k}$ while if $l>k$, the situation is again analogous to a serial dictatorship -- thus $|\mathcal{M}_{i_k|i_l}^\phi (R_{-i_k})|\leq 2$ according to lemma~\ref{lemma:SD-power}.

   $(1) \implies (2)$: For the other direction, suppose that under $\phi$ we have $|\mathcal{M}_{i|j}^\phi(R_{-i}) | \leq 2$ for all $i,j$ and $R$. We want to show that $\phi$ can be represented as a TC-mechanism with a control rights structure $c$ that (i) has at most two owners at the empty submatching, (ii) exactly onw owner at each subsequent proper submatching, and (iii) the order of owners  is fixed.

   \noindent \textbf{Claim 1}: \emph{At any submatching $\sigma$, at most two distinct agents control objects.}

    \emph{Proof.} Suppose not. In the absence of a broker, there are (at least) three unassigned objects $a,b,c$ with three distinct owners $c_\sigma(a)=1$, $c_\sigma(b)=2$, $c_\sigma(c)=3$. If their preferences are given as $R_1:a...$, $R_2:acb...$, and $R_3:abc...$, then $1$ keeps $a$ while $2$ and $3$ trade -- in particular, $3$ is assigned $b$. If instead agent $1$'s preferences are given as $R'_1:b...$, $1$ trades with $2$ while $3$ keeps her object $c$. Last, if $1$'s preferences are given as $R''_1:c...$, $1$ and $3$ trade and $3$ is assigned $a$. Thus, depending on the preferences of $1$, $3$ is assigned at least three different objects, $a$, $b$, or $c$, i.e., $|\mathcal{M}_{i|j}^\phi(R_{-i})| >2$ (for $i=1$ and $j=3$).
  
   Alternatively, there may be a broker and (at least) two owners. 
   To derive a contradiction, let $3$ be the broker for $c$; 
   let $1$ and $2$ still own $a$ and $b$ and let $R_{-1}$ be unchanged.
   Then as above, we again find that depending on the preferences of $1$, $3$ is assigned at least three different objects, $a$, $b$, or $c$. 
   

    As last remaining case, suppose there are three brokers and three objects -- say $1$ is broker for $a$, $2$ for $b$ and $3$ for $c$. Moreover, suppose that $R_2:bca$ and $R_3:cba$. Then if $1$ ranks $b$ first, the resulting assignment is $(\mu_1,\mu_2,\mu_3)=(b,c,a)$ as it is the only Pareto-efficient assignment that avoids assigning objects to their broker. If instead $1$ ranks $c$ first the resulting assignment is $(c,a,b)$, by the same logic. Last, if $1$ ranks $a$ first, there is only one Pareto-efficient assignment, namely $(a,b,c)$. Thus depending on the preferences of $1$, each of the other two agents is assigned three different objects.\hfill{$\diamond$}

   \noindent \textbf{Claim 2}: \emph{Without loss of generality, we can assume that the control right structure $c$ that implements $\phi$ via a TC-mechanism is such that there is no broker.}
   
   \emph{Proof.} 
    Towards a contradiction, suppose that every TC-mechanism with control rights structure $c$ that implements $\phi$ includes a submatching  $\sigma$ where some agent $i$ is a broker for some object $x$. By claim 1, all remaining unassigned objects $\overline{O}_\sigma\backslash\{x\}$ are then owned by a single agent $j\neq i$. Since there are finitely many submatchings, we can further assume that $\sigma$ is minimal, i.e., that at all submatchings $\sigma'$ with $|\sigma'|<|\sigma|$, there is no broker.

    Depending on the preferences of $i$ and $j$, the next submatching is either $\sigma^y=\sigma\cup\{(j,y)\}$ if $j$ considers some $y\in\overline{O}_\sigma\backslash\{x\}$ most-preferred or $\sigma^{xy} =\sigma\cup\{(j,x),(i,y)\}$ if $j$ considers $x$ most preferred and $y$ is the most preferred object of $i$ among objects other than $x$. In particular the submatching $\sigma^x=\sigma\cup\{(j,x)\}$ is never reached from $\sigma$, given $c$.

   But then we can define an alternative control right structure $c'$ where at $\sigma$, $j$ owns all objects (including $x$), and at $\sigma^{x}$, $i$ is the owner of all objects other than $x$. At all other submatchings $\sigma'$, we set $c'_{\sigma'}=c_{\sigma'}$. 

      To see that $c'$ is a well-defined control rights structure, and leads to the same assignments as $c$, first consider any $\sigma'$ with $|\sigma'|<|\sigma|$. As $\sigma$ was a minimal submatching with a broker, we know that at $\sigma'$ there is no broker and at most two owners (claim 1). Moreover, as $c'_{\sigma'}=c_{\sigma'}$ the same objects will be assigned to current owners under $c$ and $c'$ and we move on to the next submatching $\sigma''$ where $|\sigma''|\leq |\sigma'|+2\leq |\sigma|+1=|\sigma^x|$. If at $\sigma''$ $c$ and $c'$ agree as well, then control rights were preserved under $c'$ as they were under $c$. If $c$ and $c'$ differ at $\sigma''$, that implies (i) $\sigma''=\sigma$ or (ii) $\sigma''=\sigma^x$. 
      
      Consider (i) $\sigma''=\sigma$ : every agent who was an owner under $c$, namely $j$, is still is an owner under $c'$, thus their ownership was preserved under $c'$, given that this was the case under $c$. 
      
      Consider (ii) $\sigma''=\sigma^x$: then we moved from $\sigma'$ with $|\sigma'|<|\sigma|$ to $|\sigma''|>|\sigma|$ which implies that there were two owners at $\sigma'$ who were both matched -- and there are no ownership rights to be preserved.

      Second, suppose $\sigma'=\sigma$. Then under both $c$ and $c'$, $j$ is assigned her most-preferred object. If that object is $z\neq x$, we move to $\sigma''=\sigma \cup \{(j,z)\}$, both under $c$ and $c'$, where $i$ is the broker for $x$ (still serving in that role under $c$, new to that role under $c'$). Since the only owner at $\sigma'$ was assigned, there are no onwership-rights to be preserved. If instead $j$ considers $x$ most-preferred, we arrive at  $\sigma^{xy}$ under $c$; under $c'$ we arrive at the same submatching but in a separate step via $\sigma^{x}$.

   Last, suppose $|\sigma'|\geq|\sigma|$ with $\sigma'\neq \sigma$. But then there is some agent $k\neq i,j$ assigned under $\sigma'$ but neither under $\sigma$, nor $\sigma^x$. Thus, $c$ and $c'$ agree for $\sigma'$ and yield the same subsequent submatching $\sigma''$. As $k$ is still assigned under a subsequent submatching $\sigma''$, we know that $\sigma''\neq \sigma, \sigma^x$, hence $c$ and $c'$ also agree for $\sigma''$. But then $c'$ preserves control rights in the same way as $c$.

   This establishes $c'$ as a well defined control right structure that yields the same assignments as $c$. Since $c'$ has strictly fewer submatchings with a broker, and as there are finitely many submatchings to consider, repeated application of the argument allows us to eliminate any broker, thus establishing the claim. \hfill{$\diamond$}

   By our first two claims, we can assume that there is either a single owner or two owners at any submatching  $\sigma$. We first consider the single owner case and show that their choice does not influence who inherits an unpicked object (claim 3), that whatever they pick there is only one agent who inherits all unpicked objects (claim 4), that the identity of this agent is not influenced by the object that the current owner picks (again claim 4), and that the choice of the current owner also does not influence inheritance at a later point (claim 5). 

   \noindent \textbf{Claim 3}:  \emph{Consider a submatching $\sigma$, where $i$ is the single owner of at least three unassigned objects, including $z$. Then it cannot be that when $i$ chooses some object, say $x$, $j$ becomes the owner of $z$ at $\sigma^{x}=\sigma\cup\{(i,x)\}$ while when $i$ chooses another object, say $y$, another agent $k\neq j$ becomes the owner of $z$ at $\sigma^{y}=\sigma\cup\{(i,y)\}$.}

    \emph{Proof.}    Towards a contradiction, suppose ownership at $\sigma^x$ and $\sigma^y$ would be as described above. Consider $R_j:zy...$ and $R_k:z...y$, so $y$ as least-preferred for $k$. Moreover, suppose that all agents unassigned at $\sigma$, besides $i$ and $j$, also consider $y$ least preferred. Then for $R_i:z...$, $i$ is assigned $z$ and $j$ is assigned $y$ by Pareto-efficiency. For $R'_i:x...$ we reach $\sigma^x$ and $j$ owns $z$ -- then $j$ is assigned $z$ as it is her most-preferred object. For $R''_i:y...$ we reach $\sigma^y$ and $k$ owns $z$ -- but then $k$ will be assigned $z$ so that $j$ is assigned neither $y$ nor $z$ but some third object. Thus, depending on $i$'s preferences, $j$ is assigned three different objects, a contradiction. \hfill{$\diamond$}

   \noindent \textbf{Claim 4}: \emph{If at some submatching $\sigma$, there is a single owner, then in the subsequent submatching there will again be a single owner. Moreover, that owner is the same at all subsequently reached submatchings, i.e., does not depend on the choice of the owner at $\sigma$. }
   
   \emph{Proof.}    Let $i$ be the sole owner at $\sigma$. We want to rule out that there are two owners $j$ and $k$ at the subsequent submatching that results from $i$'s choice.

    Towards a contradiction, suppose that $j$ owns $y$ and $k$ owns $z$ after $i$ picks $x$. Then, by claim 3, $j$ also owns $y$ if $i$ picks $z$. In the same way, $k$ also owns $z$ if $i$ picks $y$. Now, if $R_j:zy...$ and $R_k:yz...$, we get the following assignment of objects for $j$: if $i$ picks $x$, $j$ and $k$ trade and $j$ is assigned $z$. If $i$ picks $y$, $k$ owns $z$ and picks it so that $j$ is assigned neither $y$ nor $z$. If $i$ picks $z$, $j$ owns $y$ and picks it. Thus, depending on $i$'s choice, $j$ will be assigned at least three different objects, a contradiction. 
    
    Moreover, if there was a single owner, say $j$, after $i$ picks $x$ and another single owner, say $k\neq j$, after $i$ picks $y$, then there would be a third object $z$ whose  ownership moves between $j$ and $k$ depending on $i$'s choice -- again a violation of claim 3.  \hfill{$\diamond$}

    \noindent \textbf{Claim 5}: \emph{If at some submatching $\sigma$, there is a single owner $i$, and a sequence of agents  $j_1,j_2,...j_h$ who inherit objects in a fixed order following $i$'s choice, then $i$'s choice cannot influence who as subsequent owner inherits objects after $j_h$.}

     \emph{Proof.}    Consider the submatching arrived from $\sigma$ where $i$ picks $x\in \overline{O}_\sigma$ and each agent $j_1,...j_h$ picks their most preferred object according to their (fixed) preferences; denote it as $\sigma^x$. Now across different submatchings $\sigma^x$, $\sigma^y$... the corresponding sets of unassigned objects differ by one object: $|\cup_{o'\in \overline{O}_\sigma} \overline{O}_{\sigma^{o'}}|-1= |\overline{O}_{\sigma^{o}}|$, for all $o\in \overline{O}_\sigma$, see claim 1 in the proof of lemma \ref{lemma:SD-power}.  Now towards a contradiction, assume that for $\sigma^x$, $k$ is the owner while for $\sigma^y$, $l\neq k$ is the owner. Since there are (at least) two unassigned agents, there need to be two unassigned objects at $\sigma^x$ as well as at $\sigma^y$. Since the sets of unassigned objects are close to each other -- each differing by (at most) one element from the union -- there is an object included in both $\overline{O}_{\sigma^x}$ and $\overline{O}_{\sigma^y}$ that changes ownership; call it $z$. Then in the same way as in claim 3, we arrive at a contradiction. \hfill{$\diamond$}

    Thus, if at some point we reach a submatching with a single owner, then there is a sequence of subsequent submatchings with ownership determined in a fixed order, that cannot be affected by choices of (previous) owners. Next we show that even if initially there are two owners, then at any subsequent submaching there is only one owner.
   
   \noindent \textbf{Claim 6}: \emph{If there are two owners at $\sigma=\emptyset$, 
   there is at most one owner at the subsequent submatching.}

   \emph{Proof.}  Suppose $i$ owns $x$ and $j$ owns $y$ at $\sigma = \emptyset$. First, consider $\sigma^x=\{(i,x)\}$, i.e., a submatching reached where $i$  is assigned (one of) her owned object(s) and $j$ remains unassigned. (For example this would arise if $i$ considers $x$ most-preferred while $j$ considers one of the objects owned by $i$ most-preferred, either $x$ or some of $i$'s other objects.) At $\sigma^x$, $j$ still owns $y$. Towards a contradiction, suppose there is another agent $k$ who owns another object $z$. Since $j$'s ownership of objects is preserved, $z$ must have been owned by $i$ at $\sigma$. But then consider $R_k:zyx...$ and $R_j:z...xy$ (ranking $x$ second-to-last and $y$ last)  while every agent other than $i, j$ and $k$ also ranks $x$ second-to-last and $y$ last. For $R_i:x...$, $i$ picks $x$ at $\sigma$, $k$ owns $z$ at $\sigma^x$ and hence is assigned $z$. For $R'_i:y...$, $i$ and $j$ trade $z$ and $y$ at $\sigma$, so that $k$ is assigned $x$ by Pareto-efficiency (as he ranks it first among the remaining objects, while everyone else ranks it last). For $R''_i:z...$, $i$ picks $z$ at $\sigma$, then $k$ is assigned $y$ by Pareto-efficiency (ranks it first among remaining objects while everyone else ranks it last). Thus, depending on $i$'s preferences, $k$ would be assigned at least three different objects, a contradiction. Similarly there can be no two owners at a subsequent submatching $\sigma^y=\{(j,y)\}$ where $j$ is assigned one of the objects she owns.
   
   Next, consider $\sigma^{xy}=\{(i,x),(j,y)\}$, i.e., a submatching where both $i$ and $j$ are assigned one of the objects they own. Such a submatching is reached if both $i$ and $j$ consider one of their own objects most preferred, $x$ for $i$ and $y$ for $j$. However the same submatching is also reached for $R_i:yx...$ and $R_j:y...$, since then we first move to $\sigma^y$ (assigning $y$ to $j$); by the above we then know that there is only one owner at $\sigma^y$ and since $i$ is an owner we know that $i$ must own all remaining objects (not only $x$). Given $R_i:yx...$ we then reach $\sigma^{xy}=\{(i,x),(j,y)\}$ and by claim 4 we know that there can be only one owner at $\sigma^{xy}$ as this submatching is reached from $\sigma^y$.  
   
   Last, consider $\sigma^{yx}=\{(i,y),(j,x)\}$, i.e., a submatching where both $i$ and $j$ are assigned one of the objects owned by the other agent. Such a  submatching is reached if $i$ considers $y$ most preferred and $j$ considers $x$ most preferred. We want to show that any object $z$ is owned by the same agent in $\sigma^{yx}$ as in $\sigma^{xy}$ -- since we saw above that there is a single owner at $\sigma^{xy}$ this implies that there is a single owner at $\sigma^{yx}$. Towards a contradiction, assume that $k$ owns $z$ at $\sigma^{yx}$ while $l\neq k$ owns $z$ at $\sigma^{xy}$. 
   W.l.o.g., assume that $i$ owns $z$ (otherwise relabel $i$ and $j$). Further, suppose preferences are given as $R_j:xy...$, $R_k:zyx...$, $R_l:z...y$ while all other agents besides $i,j,k,$ and $l$ rank $y$ last. Then for $R_i:y...$ we reach $\sigma^{yx}$ and $k$ owns $z$; as it is her most-preferred object, she is assigned $z$. For $R_i':x...$ we first reach $\sigma^{x}$. 
   By the first case above, we know that there is a single owner at $\sigma^x$ and since $j$ still owns $y$ it must be $y$. After $j$ picks $y$ we then reach $\sigma^{xy}$ so that $l$ owns $z$ and is assigned $z$ -- but then $k$ is assigned neither $y$ nor $z$. Third, for $R''_i:z...$ $i$ picks $z$ (as its owner), $j$ becomes the owner of all remaining objects at $\sigma^{z}$ and picks $x$, and eventually, by Pareto-efficiency, $k$ is assigned~$y$ -- a contradiction, as $k$ is assigned three different objects depending on $i$'s preferences.   \hfill{$\diamond$}

Thus, in a setting with two owners, if only one of them is matched, then the other one becomes the owner, while if both are matched there is a single owner at the resulting submatching. Next we show that once both initial owners have been matched, the subsequent single owner must be the same across all possible submatchings, independent of the objects matched to the two initial owners.

\noindent \textbf{Claim 7}:  \emph{If there are two owners at $\sigma=\emptyset$, say $i$ and $j$, and at least three objects, say $x$, $y$, and $z$, then the owner of $z$ at $\sigma^{xy}=\{(i,x),(j,y)\}$ is the same as the owner of $y$ at $\sigma^{xz}$.}

    \emph{Proof.}  Suppose that $k$ owns $z$ at $\sigma^{xy}$, while $l\neq k$ owns $y$ at $\sigma^{xz}$. But then there are at least four agents and hence there must be another object $z'\neq x,y,z$. Then $k$ also owns $z'$ at $\sigma^{xy}$ while $l$ owns $z'$ at $\sigma^{xz}$.

    Suppose $i$ considers $x$ most-preferred. Then $\sigma^{xy}$ is reached if $j$ considers $y$ most preferred, while $\sigma^{xz}$ is reached if $j$ prefers $z$ over any other object. Note that given $i$'s preferences, $j$ can pick $y$, $z$ or $z'$ (by ranking it as most-preferred) just as the single owner can choose between three objects in claim 3. By the same argument as used in that claim, $j$ cannot influence who becomes the next owner, $k$ or $l$, of an object she doesn't pick -- in particular cannot affect the ownership of $z'$ by picking $y$ or $z$, a contradiction.  \hfill{$\diamond$} 

Since $i$, and $j$, as well as $x$, $y$, and $z$ were chosen arbitrarily in claim 7, we also know that the owner at $\sigma^{xy}$ is the same as at $\sigma^{zy}$, thus the same as at $\sigma^{zx}$, thus the same as at $\sigma^{yx}$. 

Thus we know that after $i$ and $j$, the two owners as $\sigma=\emptyset$, are matched, there is a unique owner $k$, independent of the objects to which $i$ and $j$ were matched. The last thing to show is the analogue of claim 5 for two owners. Let $i$ and $j$ be the two owners at $\sigma=\emptyset$. If $i$ considers one of her own objects, most-preferred, then, from the point of view of $j$, we are in a situation where $j$ is the owner of the remaining objects and we can reduce the analysis of $j$'s influence to the simple serial dictatorship scenario treated above. If instead, $i$ considers one of $j$'s objects most preferred, say $x$, and ranks $y$ second, then the subsequent owner, say $k\neq i,j$ will find that $x$ is already picked (either by $i$ or by $j$) and that one additional object has been assigned to $i$ or $j$ (either $j$'s most-preferred object if that differed from $x$, or $y$ if $j$ also considered $x$ most preferred). Thus, all potential sets of objects that $k$ may be able to choose from differ only by one element from the union of all these sets -- and as we show in claim 1 in the proof of Lemma 5, these potential sets that subsequent owners may choose from remain close to one another in that sense. But then the order of subsequent owners cannot change by the same argument as in claim 5. This completes the proof of Theorem \ref{thm:bipolar-SD-char}

\singlespacing
\setlength{\baselineskip}{1.1em} 
\bibliographystyle{apalike}
\bibliography{bibliography}
\end{document}